\documentclass[
  aps,
  prl,
  reprint,
  amsmath,
  amssymb,
  nofootinbib,
  longbibliography
]{revtex4-2}

\usepackage{bm}
\usepackage{graphicx}
\usepackage{amsthm}
\usepackage{microtype}
\usepackage{pgfplots}
\pgfplotsset{compat=1.18}
\usepackage[colorlinks=true,allcolors=blue]{hyperref}

\newcommand{\cat}{\mathrm{cat}}

\newcommand{\Fid}{F}
\newcommand{\TD}{D}

\newcommand{\eps}{\varepsilon}

\newtheorem{theorem}{Theorem}
\newtheorem*{mainthmrestated}{Theorem~\ref{thm:main}}
\newtheorem{lemma}[theorem]{Lemma}
\newtheorem{proposition}[theorem]{Proposition}
\newtheorem{corollary}[theorem]{Corollary}

\begin{document}

\title{Entanglement Catalysis Beyond the Asymptotic Rate}

\author{Alexander Streltsov}
\email{astrel@ippt.pan.pl}
\affiliation{Institute of Fundamental Technological Research, Polish Academy of Sciences, Pawi\'nskiego 5B, 02-106 Warsaw, Poland}

\begin{abstract}
Quantum entanglement is a fundamental resource in quantum information science, commonly consumed to perform tasks like quantum teleportation and quantum key distribution. Entanglement concentration converts partially entangled pure states into nearly perfect Bell pairs. While its efficiency is well understood in the asymptotic limit, practical protocols must operate with finite systems. In this finite regime, and at the low error levels relevant in practice, performance is penalized by corrections that reduce the achievable yield. We investigate whether this fundamental finite-size penalty can be mitigated by an entanglement catalyst: an auxiliary state that aids the transformation but is returned exactly and uncorrupted. Although such a catalyst cannot increase the leading-order asymptotic yield, we show that it fundamentally alters the finite-size scaling. For every bipartite pure state with nonuniform Schmidt coefficients, we exactly determine the optimal second-order correction to the yield and show that catalysis completely overcomes this finite-size penalty.
\end{abstract}

\maketitle

\emph{Introduction.---} Much of quantum information theory is organized around asymptotic laws: when arbitrarily many identical systems are available, complicated transformation problems often collapse to simple entropic rates. Real protocols, however, never operate in this limit. Their performance is controlled by finite-size corrections \cite{TomamichelHayashi2013,Li2014}, and these corrections can reveal operational advantages that are invisible at leading order. Entanglement concentration provides a natural setting to quantify such finite-size effects, as its asymptotic rate and subleading corrections can be characterized explicitly \cite{HorodeckiEtAl2009,KumagaiHayashi2013,DattaLeditzky2015}.

A central task in quantum information science is to convert one form of entanglement into another using local operations and classical communication (LOCC). Entanglement concentration is a canonical example where this question can be addressed quantitatively \cite{BennettEtAl1996}: \(n\) copies of an entangled pure state are compressed into nearly perfect Bell pairs, which can then be used for teleportation, distributed computation, or other network tasks \cite{BennettEtAl1993,WehnerElkoussHanson2018}. In the ideal many-copy limit the optimal conversion rate is simple: the
entanglement entropy gives the number of Bell pairs per input copy
\cite{BennettEtAl1996}.  Real protocols, however, operate with finite $n$
and tolerate a specified error.  Their yield is therefore not determined by
the entanglement entropy alone: the leading finite-size corrections are of
order $\sqrt n$, and they quantify whether the
attainable yield lies above or below the asymptotic benchmark at a given block
length \cite{DattaLeditzky2015,DattaLeditzkyCorrection2017,KumagaiHayashi2017,ChubbTomamichelKorzekwa2018,KorzekwaChubbTomamichel2019,LipkaBartosikEtAl2024}.

Quantum catalysis adds an auxiliary entangled state that can enable a
transformation while being returned unchanged
\cite{JonathanPlenio1999,DattaKondraMillerStreltsov2023,LipkaBartosikWilmingNg2024}.
Its ability to improve finite-copy entanglement concentration was already
noted by Jonathan and Plenio \cite{JonathanPlenio1999}. Subsequently,
Santra and Malinovsky optimized the probability of converting finitely
many copies of a two-qubit state into one Bell pair using a two-qubit
catalyst \cite{SantraMalinovsky2021}. These finite-size advantages coexist
with an unchanged leading asymptotic singlet-distillation rate for
bipartite pure states \cite{GanardiKondraStreltsov2024}.
This motivates a quantitative question: how much can a catalyst improve
the optimal concentration yield beyond the leading entropy rate?
In this work, we determine this improvement at second order. For every entangled pure source with nonuniform Schmidt
coefficients and every fixed error tolerance
$0<\varepsilon<1/\sqrt{2}$, we show that catalysis reverses the sign
of the optimal $\sqrt n$ correction: a finite-size loss without catalysis
becomes a gain, while the leading entropy rate remains unchanged.

In this work, we focus on exact catalysis~\cite{JonathanPlenio1999}. Under this paradigm, the catalyst is a finite-dimensional pure state that is returned exactly in its initial state without any residual correlations with the main output. Exact return excludes entanglement embezzlement, which becomes possible when arbitrarily large catalysts may be returned only approximately \cite{vanDamHayden2003}. Consequently, any operational advantage must arise purely from the catalyst's assistance during the transformation, rather than from its degradation or from hidden correlations left behind. Constructive links between multiple-copy transformations and catalysts \cite{DuanFengLiYing2005} and finite sufficient conditions for catalytic majorization \cite{ElkoussEtAl2026} provide complementary tools for studying exact catalysis.

By contrast, other recent frameworks relax these stringent requirements to expand the set of possible transformations. In correlated catalysis, the catalyst's reduced state is returned exactly, but correlations with the system are allowed \cite{KondraDattaStreltsov2021,Muller2018,ShiraishiSagawa2021,Wilming2021}. For pure-state entanglement, allowing residual correlations and a small transformation error can simplify state convertibility to basic entropy conditions \cite{KondraDattaStreltsov2021}. For exact mixed-state entanglement dilution, catalysis can lower the asymptotic entanglement cost when correlations between the system and catalyst are allowed \cite{Ganardi2026}. However, forcing these residual system--catalyst correlations or conversion errors to be small can necessitate increasingly resourceful catalysts \cite{RubboliTomamichel2022}. Finite-dimensional catalysts and catalyst-size requirements have been studied in general resource theories \cite{LipkaBartosikKorzekwa2025}. Robustness against noise in the input state is addressed in Ref.~\cite{SonEtAl2026}, while catalytic concentration with mixed inputs, state-preparation and operational errors, and catalyst reuse under noise is studied in Ref.~\cite{SharmaEtAl2026}.

Beyond pure-state entanglement, catalytic effects arise across a broad range of
quantum resource theories.  In quantum coherence, catalysts can enable or
amplify coherence transformations
\cite{TakagiShiraishi2022,KondraGanardiStreltsov2024}, while universal
entanglement catalysts and catalytic assistance for noisy quantum channels
were developed in Ref.~\cite{DattaKondraMillerStreltsov2024}.  Other recent
directions include flexible catalysis
\cite{WeiszStrelchuk2025,AoPhilipStreltsov2026}, correlated catalysis in the
PPT resource theory of entanglement \cite{AoPhilipStreltsovPPT2026}, and
limitations on catalytic entanglement distillation
\cite{LamiRegulaStreltsov2024}.  Catalytic assistance has also been explored
in local state discrimination and quantum data hiding
\cite{HalderPhilipStreltsov2026,PhilipStreltsov2025}.  In quantum
thermodynamics, catalytic thermal
operations give
rise to families of generalized free-energy constraints, while their power
depends sensitively on return accuracy and physical restrictions such as
catalyst dimension or energy
\cite{BrandaoEtAl2015,NgEtAl2015,LipkaBartosikWilmingNg2024,
DattaKondraMillerStreltsov2023,Wilming2022}.

\medskip

\emph{Catalytic transformations of bipartite pure states.---}Let $p=(p_1,\ldots,p_D)$ be a nonuniform full-support probability vector,
interpreted as the squared Schmidt coefficients of a bipartite pure state. Throughout, $\log$ denotes the base-2 logarithm and $\ln$ denotes the natural logarithm.
Define the Shannon entropy $H=H(p)$, which is also the entropy of
entanglement of the associated bipartite pure state, and the varentropy $V=V(p)$ by 
\begin{equation}
 H=-\sum_i p_i\log p_i,
 \qquad
 V=\sum_i p_i[-\log p_i-H]^2.
 \label{eq:HV}
\end{equation}
Let $u_2=(1/2,1/2)$ be the squared Schmidt vector of one Bell pair.  For
probability vectors $x,y$, append zeros to a common dimension and write
$x^\downarrow,y^\downarrow$ for their decreasing rearrangements.  We say that $x$ is majorized by $y$, written $x\prec y$, when
\begin{equation}
 \sum_{i=1}^{k}x_i^\downarrow\le
 \sum_{i=1}^{k}y_i^\downarrow
 \quad(1\le k<\dim x),
 \label{eq:majorization}
\end{equation}
with equality for the total sums.  Let $\lvert\psi_x\rangle$ and
$\lvert\psi_y\rangle$ be bipartite pure states with squared Schmidt vectors
$x$ and $y$, respectively.  Nielsen's theorem states that
$\lvert\psi_x\rangle$ can be converted deterministically by LOCC into
$\lvert\psi_y\rangle$ if and only if $x\prec y$ \cite{Nielsen1999}.

Let $\lvert\chi_r\rangle$ be a finite-dimensional pure auxiliary state with
squared Schmidt vector $r$.  We call $\lvert\chi_r\rangle$ a catalyst for the
conversion $\lvert\psi_x\rangle\to\lvert\psi_y\rangle$ if
\begin{equation}
 \lvert\psi_x\rangle\otimes\lvert\chi_r\rangle
 \xrightarrow{\mathrm{LOCC}}
 \lvert\psi_y\rangle\otimes\lvert\chi_r\rangle,
 \label{eq:state-catalysis}
\end{equation}
so the auxiliary state is returned exactly and uncorrelated with the output.
By Nielsen's theorem, Eq.~\eqref{eq:state-catalysis} is equivalent to
$x\otimes r\prec y\otimes r$.  We therefore write $x\prec_T y$ if there
exists a finite probability vector $r$ with this property.  There exist pairs
with $x\nprec y$ but $x\prec_T y$, so a catalyst can enable a deterministic
pure-state LOCC transformation that is impossible without it
\cite{JonathanPlenio1999}.

For a probability vector $x$, define the Klimesh functions \cite{Klimesh2007}
\begin{equation}
 f_\alpha(x)=
 \begin{cases}
  \displaystyle \log\!\sum_i x_i^\alpha, & \alpha>1,\\[1mm]
  \displaystyle \sum_i x_i\log x_i=-H(x), & \alpha=1,\\[1mm]
  \displaystyle -\log\!\sum_i x_i^\alpha, & 0<\alpha<1,\\[1mm]
  \displaystyle -\sum_i\log x_i, & \alpha=0,\\[1mm]
  \displaystyle \log\!\sum_i x_i^\alpha, & \alpha<0.
 \end{cases}
 \label{eq:Klimesh-functions}
\end{equation}
For $\alpha\le0$, if $x$ has a zero component, we set
$f_\alpha(x)=+\infty$.  For pairs $x,y$ that are not related by a
permutation and do not both contain zero components, the Klimesh--Turgut
theorem \cite{Klimesh2007,Turgut2007} states that
\begin{equation}
 x\prec_T y
 \quad\Longleftrightarrow\quad
 f_\alpha(x)<f_\alpha(y)\ \text{for every }\alpha\in\mathbb R.
 \label{eq:Klimesh}
\end{equation}

Aubrun and Nechita characterized the $\ell_1$-closure of the set
of probability vectors that can be catalytically converted into
a fixed target $y$~\cite{AubrunNechita2008}. Specifically, a source
vector $x$ can be approximated arbitrarily closely in $\ell_1$
by finite probability vectors $x'$ satisfying $x'\prec_T y$ if and only if
$\|x\|_\alpha\le\|y\|_\alpha$ for all $\alpha\ge1$, where
$\|x\|_\alpha=(\sum_i x_i^\alpha)^{1/\alpha}$ is the
$\ell_\alpha$ norm.

\medskip

\emph{Catalytic second-order concentration.---}For quantum states
$\rho$ and $\sigma$, let
$\Fid(\rho,\sigma)=\|\sqrt{\rho}\sqrt{\sigma}\|_1^2$ denote the squared fidelity and $\TD(\rho,\sigma)=\tfrac12\|\rho-\sigma\|_1$ the
trace distance, where the trace norm is
$\|A\|_1=\operatorname{Tr}\sqrt{A^\dagger A}$.  For pure states,
$\TD=\sqrt{1-\Fid}$.  If their Schmidt
bases are aligned and their squared Schmidt vectors are $x$ and $y$, these
quantities reduce to
\begin{equation}
 \Fid(x,y)=\left(\sum_i\sqrt{x_i y_i}\right)^2,
 \qquad
 \TD(x,y)=\sqrt{1-\Fid(x,y)}.
 \label{eq:distance}
\end{equation}
   
Let $M_n^{\cat}(\eps)$ be the largest integer $m$ for which probability
vectors $r_n,s_n$ satisfy
\begin{equation}
 p^{\otimes n}\otimes r_n\prec s_n\otimes r_n,
 \qquad
 \TD(s_n,u_2^{\otimes m})\le\eps .
 \label{eq:conversion}
\end{equation}
We denote by $M_n(\eps)$ the corresponding unassisted optimum, obtained by
requiring $p^{\otimes n}\prec s_n$ with the same error constraint.
The theorem below determines the optimal fixed-error asymptotics through the
leading finite-size correction.

\begin{theorem}
\label{thm:main}
For every nonuniform full-support $p$ and every fixed $0<\eps<1$,
\begin{equation}
 M_n^{\cat}(\eps)=nH+b_{\cat}(\eps)\sqrt n+o(\sqrt n),
 \label{eq:second-order-expansion}
\end{equation}
where the optimal coefficient is
\begin{equation}
 b_{\cat}(\eps)=
 \lim_{n\to\infty}\frac{M_n^{\cat}(\eps)-nH}{\sqrt n}
 =\sqrt{2(\ln 2)V\log\!\frac{1}{1-\eps^2}}.
 \label{eq:main}
\end{equation}
\end{theorem}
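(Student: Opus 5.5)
The proof has two halves, each obtained by optimizing a Rényi-entropy trade-off over the order $\alpha$ in a window $|\alpha-1|\sim n^{-1/2}$. The converse uses the Klimesh functions of Eq.~\eqref{eq:Klimesh-functions} with $\alpha$ slightly above $1$. Achievability uses an explicit target $s_n$ and checks the Klimesh--Turgut criterion \eqref{eq:Klimesh}, where the binding orders are slightly below $1$. Throughout, write $H_\alpha$ for Rényi entropy and set
\[
L=\log\frac{1}{1-\eps^2},\qquad c=\frac{(\ln 2)V}{2}.
\]
The Taylor expansion $H_\alpha(p)=H-(\alpha-1)c+O((\alpha-1)^2)$ and additivity $H_\alpha(p^{\otimes n})=nH_\alpha(p)$ give, uniformly for $|\alpha-1|\le Cn^{-1/2}$,
\[
nH_\alpha(p)=nH-n(\alpha-1)c+O(1).
\]

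\emph{Converse.} Suppose $p^{\otimes n}\otimes r\prec s\otimes r$. Majorization is preserved under tensoring with $r$, and the $\ell_\alpha$ norm is multiplicative and Schur-convex for $\alpha>1$. Hence $\|p^{\otimes n}\|_\alpha\le\|s\|_\alpha$, i.e. $H_\alpha(s)\le nH_\alpha(p)$ for all $\alpha>1$; this is also the Aubrun--Nechita necessary condition.

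Next, bound $\sum_i s_i^\alpha$ below by the fidelity. Only the $2^m$ coordinates aligned with $u_2^{\otimes m}$ matter, and we may take them to be the largest ones. Applying the power-mean inequality to $\sqrt{s_i}$ over these $2^m$ coordinates gives
\[
\sum_i s_i^\alpha\ \ge\ 2^{m(1-\alpha)}\Fid^\alpha,\qquad \Fid\ge 1-\eps^2 .
\]
Therefore $H_\alpha(s)\le m-\tfrac{\alpha}{\alpha-1}L$, and since $H_\alpha(s)\le nH_\alpha(p)$ this rearranges, with $t=\alpha-1$, to
\[
m\ \le\ nH+\frac{L}{t}+L-ntc+O(1).
\]
Choosing $t=\sqrt{L/(nc)}$ yields
\[
m\le nH+2\sqrt{ncL}+O(1)=nH+\sqrt{2(\ln2)VL}\,\sqrt n+O(1).
\]
This is the upper bound in Eq.~\eqref{eq:main}.

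\emph{Achievability.} Take $m=nH+b\sqrt n$ with $b<b_{\cat}(\eps)$, and $\delta$ slightly below $\eps^2$. Define
\[
s_n=(1-\delta)\,u_2^{\otimes m}+\delta\, e_1,
\]
a uniform block with one extra atom placed on an aligned coordinate. A direct computation gives $\Fid=1-\delta+O(2^{-m/2})$, so $\TD\le\eps$ for large $n$. Since $s_n$ has only $2^m<D^n$ nonzero entries, it has zeros while $p^{\otimes n}$ has full support, and the two are not permutations of each other. So the Klimesh--Turgut theorem \eqref{eq:Klimesh} applies, and it suffices to check $f_\alpha(p^{\otimes n})<f_\alpha(s_n)$ for every real $\alpha$. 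This provides a finite catalyst $r_n$ directly, without needing closure arguments. The check splits by $\alpha$:
\begin{itemize}
\item[(i)] $\alpha\le0$: trivial, since $f_\alpha(s_n)=+\infty$.
\item[(ii)] $\alpha=1$: $H(s_n)\le(1-\delta)m+1<nH$ with a margin of order $\delta n$.
\item[(iii)] $\alpha>1$: $H_\alpha(s_n)$ is nonincreasing in $\alpha$ and is bounded by $\min\{H(s_n),\tfrac{\alpha}{\alpha-1}\log\tfrac1\delta\}$. On the other side, $nH_\alpha(p)$ stays within $O(n(\alpha-1))$ of $nH$ for small $\alpha-1$, and is at least $nH_\infty(p)>0$, linear in $n$, for all $\alpha$. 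The two bounds on $H_\alpha(s_n)$ cover the ranges $\alpha-1\lesssim\delta$ and $\alpha-1\gtrsim$ const$/n$, respectively, so the inequality holds throughout.
\item[(iv)] $0<\alpha<1$, the binding regime: $\sum_i s_i^\alpha\ge(1-\delta)^\alpha2^{m(1-\alpha)}$ gives $H_\alpha(s_n)\le m-\tfrac{\alpha}{1-\alpha}\log\tfrac{1}{1-\delta}$. Meanwhile $nH_\alpha(p)\ge nH+n(1-\alpha)c-O(n(1-\alpha)^2)$. The strict inequality thus reduces to
\[
b\sqrt n<\frac{\alpha\log\frac{1}{1-\delta}}{1-\alpha}+n(1-\alpha)c-O\big(n(1-\alpha)^2\big).
\]
By the same one-variable optimization as in the converse, the right-hand side has minimum value $2\sqrt{nc\log\frac{1}{1-\delta}}\,(1+o(1))$, which exceeds $b\sqrt n$. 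For $1-\alpha$ bounded away from $0$, use instead $H_\alpha(s_n)\le m$ together with strict concavity of $nH_\alpha(p)-nH$ near $\alpha=1$.
\end{itemize}
Letting $\delta\to\eps^2$ and $b\to b_{\cat}$ gives the matching lower bound.

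\emph{Main obstacle.} The converse is clean. The delicate step is making case (iv) rigorous uniformly over all $\alpha\in(0,1)$ with strict inequality. This requires controlling the $O(n(1-\alpha)^2)$ Taylor remainder across the whole interval, stitching the $|1-\alpha|\sim n^{-1/2}$ window to the fixed-$\alpha$ regime, and handling the crossover in case (iii) at $\alpha-1\sim1/n$, where the atom $\delta e_1$ begins to dominate $\|s_n\|_\alpha$. I would handle this by splitting into $1-\alpha\le n^{-1/3}$, where a uniform second-order expansion applies, and $1-\alpha>n^{-1/3}$, where the linear-in-$n$ gap $n(H_\alpha(p)-H)\ge\Omega(n^{2/3})$ swamps the $b\sqrt n$ term.
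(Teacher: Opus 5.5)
There is a genuine gap: both the converse and the binding case (iv) of your achievability argument use inequalities in the wrong direction.

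\emph{Converse.} On the side $\alpha>1$, your two ingredients point the same way, so they cannot be combined.
\begin{itemize}
\item The trumping monotone gives $\sum_i s_i^\alpha\ge(\sum_i p_i^\alpha)^n$.
\item The power-mean bound $\sum_i s_i^\alpha\ge 2^{m(1-\alpha)}\Fid^\alpha$ is also a lower bound on $\sum_i s_i^\alpha$. Its correct consequence is $H_\alpha(s)\le m+\frac{\alpha}{\alpha-1}\log\frac1\Fid$, with a plus sign.
\end{itemize}
Since $1-\alpha<0$, both are upper bounds on $H_\alpha(s)$, and nothing about $m$ follows. Your bound $m\le nH_\alpha(p)+\frac{\alpha}{\alpha-1}L$ would need $H_\alpha(s)\ge m-\frac{\alpha}{\alpha-1}L$. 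That is false: an output with one atom of mass $\eps^2$ can still meet the fidelity constraint, yet has $H_\alpha(s)\le\frac{\alpha}{\alpha-1}\log\frac1{\eps^2}$ whatever $m$ is. The paper's lemma on orders above one makes this precise: the spiked output satisfies every $\alpha>1$ inequality for every $\beta$, so these orders carry no second-order information.

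The optimization is also off. The expression $L/t-ntc$ is strictly decreasing in $t$ and vanishes at your choice $t=\sqrt{L/(nc)}$. Taking $t=K/\sqrt n$ with $K$ large, your inequality would even give $m\le nH-\Omega(\sqrt n)$. The value $2\sqrt{ncL}$ appears only with $+ntc$, i.e.\ with $\alpha=1-t$. That is the paper's route. For $\alpha_n=1-t/\sqrt n$ the power sum is Schur-concave, so after cancelling the catalyst, $\sum_i s_i^{\alpha_n}\le(\sum_i p_i^{\alpha_n})^n$. Your power-mean inequality (valid for $\alpha\ge\frac12$) gives $\sum_i s_i^{\alpha_n}\ge2^{m(1-\alpha_n)}(1-\eps^2)^{\alpha_n}$. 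Now the two bounds are opposed and yield $m\le nH_{\alpha_n}(p)+\frac{\alpha_n}{1-\alpha_n}L$, whose optimization over $t$ is exactly the one you wrote.

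\emph{Achievability, case (iv).} The same reversal occurs here. For $0<\alpha<1$, $\sum_i s_i^\alpha\ge(1-\delta)^\alpha2^{m(1-\alpha)}$ gives the lower bound $H_\alpha(s_n)\ge m-\frac{\alpha}{1-\alpha}\log\frac1{1-\delta}$. The Klimesh--Turgut condition needs an upper bound. Your stated upper bound is false near $\alpha=1$: its right side tends to $-\infty$ as $\alpha\to1^-$, while $H_\alpha(s_n)\to H(s_n)>0$.

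A correct bound must keep the atom, e.g.\ $\sum_i s_{n,i}^\alpha\le\delta^\alpha+(1-\delta)^\alpha2^{m(1-\alpha)}$ by subadditivity of $x\mapsto x^\alpha$.
\begin{itemize}
\item In the window $1-\alpha\sim n^{-1/2}$, the $\delta^\alpha$ term is negligible and your quadratic optimization goes through.
\item For $1-\alpha=O(1/n)$, the atom contributes at order one and needs a separate argument. With $1-\alpha\approx\lambda/n$, the condition becomes $\delta+(1-\delta)2^{\lambda H}<2^{\lambda H}$. As $\lambda\to0$, it reduces to a first-order comparison governed by the linear entropy gap.
\end{itemize}
This is exactly the $\lambda_n$ case analysis in the paper's lemma on orders below one. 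Your $n^{-1/3}$ split controls only the Taylor remainder of $g$, not this atom term. The crossover at $|\alpha-1|\sim1/n$ occurs on both sides of $1$, not only in case (iii).

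Case (iii) itself is fine, and simpler than the paper's treatment: it uses $H_\alpha(s_n)\le\min\{H(s_n),\frac{\alpha}{\alpha-1}\log\frac1\delta\}$ together with $nH_\alpha(p)\ge nH_\infty(p)$.
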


Without catalysis, optimal approximate pure-state conversion is characterized
by Ref.~\cite{VidalJonathanNielsen2000}, and the established second-order
expansion of $M_n(\eps)$ has coefficient
\cite{DattaLeditzky2015,DattaLeditzkyCorrection2017,KumagaiHayashi2017,ChubbTomamichelKorzekwa2018}
\begin{equation}
 b_0(\eps)=\sqrt V\,\Phi^{-1}(\eps^2),
 \label{eq:b0}
\end{equation}
where $\Phi$ is the standard normal cumulative distribution function.
Consequently,
\begin{equation}
 0<\eps<\frac1{\sqrt2}
 \quad\Longrightarrow\quad b_0(\eps)<0<b_{\cat}(\eps).
 \label{eq:universal-sign}
\end{equation}

\begin{figure}[t]
 \centering
 \includegraphics[width=\columnwidth]{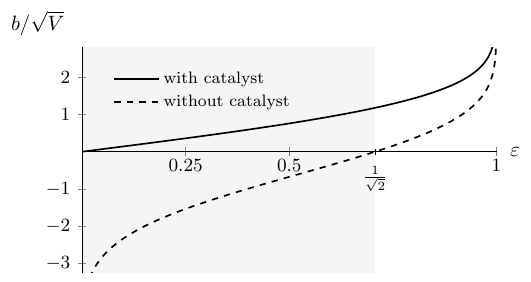}
 \caption{Universal fixed-error second-order coefficients normalized by
 $\sqrt V$. With catalysis we have
 $b_{\cat}/\sqrt V=\sqrt{2\ln[1/(1-\eps^2)]}$ (solid), while the non-catalytic case gives $b_0/\sqrt V=\Phi^{-1}(\eps^2)$ (dashed). The shaded interval
 $0<\eps<1/\sqrt2$ shows the sign-reversal regime.}
 \label{fig:coefficients}
\end{figure}

The theorem describes a qualitative change in finite-block performance.  In
the sign-reversal range, ordinary LOCC must produce fewer than $nH$ Bell pairs
by an amount proportional to $\sqrt n$, whereas an exact catalyst permits more
than $nH$ Bell pairs by the same scaling. The separation becomes especially
pronounced in the limit $\varepsilon \rightarrow 0$, where we have $b_{\cat}\to0$ while $b_0\to-\infty$.
Here $n\to\infty$ is taken first at fixed $\eps>0$; this does not describe
the exactly zero-error task, whose deterministic asymptotic entanglement
exchange rate is unchanged by catalysts \cite{DuanFengJiYing2005}.
The two universal normalized coefficient curves, including the sign-reversal
interval, are shown in Fig.~\ref{fig:coefficients}.

We present the main idea of the proof below; details are given in the Supplemental
Material.  We start with the achievability part. The construction separates the two
tasks that an approximate concentration output must satisfy.  Most of the
probability mass should look like the uniform spectrum of $m_n$ Bell pairs,
while enough nonuniformity must remain to satisfy every catalytic-majorization
constraint.  We realize this with an output state of the form
\begin{equation}
 s_n=\left(
 \eta,
 \underbrace{\frac{1-\eta}{N_n-1},\ldots,
 \frac{1-\eta}{N_n-1}}_{N_n-1\ \mathrm{entries}}
 \right).
 \label{eq:spiked-flat}
\end{equation}
In the following, we fix $\eta=\eps^2$. Moreover we can choose any
$0<\beta<B(\eps)$, where
\begin{equation}
 B(\eps):=\sqrt{2(\ln 2)V[-\log(1-\eta)]}.
 \label{eq:B}
\end{equation}
Set $m_n=\lfloor nH+\beta\sqrt n\rfloor$, $N_n=2^{m_n}$, and let
$u_{N_n}=(1/N_n,\ldots,1/N_n)=u_2^{\otimes m_n}$.  We use
$h_2(q)=-q\log q-(1-q)\log(1-q)$ for the binary entropy.  As we prove in the
Supplemental Material,
\begin{align}
 \Fid(s_n,u_{N_n})
 &=\frac{[\sqrt\eta+\sqrt{(N_n-1)(1-\eta)}]^2}{N_n}
 >1-\eta,\label{eq:fidelity}\\
 H(s_n)&=h_2(\eta)+(1-\eta)\log(N_n-1)<nH
 \label{eq:entropy}
\end{align}
for all sufficiently large $n$.  

The remaining issue is the existence of a catalyst.  For this, we need to
verify that the Klimesh--Turgut inequalities in Eq.~\eqref{eq:Klimesh} hold
in this case.  Because the padded output
has zeros while the source has full support, the nonpositive-order conditions
are automatic.  More explicitly, nonuniformity of $p$ gives $H<\log D$, while
$\log N_n=nH+O(\sqrt n)$, and hence $N_n<D^n$ for all sufficiently large
$n$.  After padding $s_n$ to dimension $D^n$, it therefore contains zeros.
Thus the Klimesh functions $f_\alpha$ of the source are finite for $\alpha\le0$, whereas
those of the padded output diverge; the nonpositive-order inequalities
are automatic, and the support difference also rules out the trivial
permutation case.  

It remains to prove the positive-order conditions in
Eq.~\eqref{eq:Klimesh}, namely
\begin{subequations}\label{eq:KT-special-main}
\begin{align}
 \sum_i s_{n,i}^\alpha
 &>\left(\sum_i p_i^\alpha\right)^n, &&\alpha>1,\\
 H(s_n)&<nH,\\
 \sum_i s_{n,i}^\alpha
 &<\left(\sum_i p_i^\alpha\right)^n, &&0<\alpha<1.
\end{align}
\end{subequations}
The middle inequality follows directly from Eq.~\eqref{eq:entropy}.
The Supplemental Material proves both power-sum statements.  Interestingly,
the upper bound $\beta<B(\eps)$ is needed only for the case $0<\alpha<1$;
the $\alpha>1$ inequalities hold for every fixed real $\beta$.  The proof
is by contradiction and exhausts all possible asymptotic behaviors of a
hypothetical violating sequence of orders. The Klimesh--Turgut theorem then proves the existence of a
catalyst for every sufficiently large $n$.

The converse starts from an arbitrary feasible protocol; it does not
assume the output form in Eq.~\eqref{eq:spiked-flat}.  Consider any
feasible output $s_n$ with target
support $N_n=2^{m_n}$ and any catalyst $r_n$.  Fix $t>0$ and choose
\begin{equation}
 \alpha_n=1-\frac{t}{\sqrt n}.
 \label{eq:converse-alpha}
\end{equation}
For all sufficiently large $n$, $1/2<\alpha_n<1$.  For
$0<\alpha_n<1$, the power sum $z\mapsto\sum_k z_k^{\alpha_n}$ is Schur
concave because $x\mapsto x^{\alpha_n}$ is concave.  Hence the catalytic
majorization relation $p^{\otimes n}\otimes r_n\prec s_n\otimes r_n$ gives
\begin{equation}
 \sum_{i,j}(p_i^{\otimes n}r_{n,j})^{\alpha_n}
 \ge \sum_{i,j}(s_{n,i}r_{n,j})^{\alpha_n}.
 \label{eq:converse-product}
\end{equation}
Both sides factorize and the positive factor $\sum_j r_{n,j}^{\alpha_n}$ cancels exactly, leaving
\begin{equation}
 \left(\sum_i p_i^{\alpha_n}\right)^n
 \ge \sum_i s_{n,i}^{\alpha_n}.
 \label{eq:converse-power}
\end{equation}

It remains to lower-bound the output power sum using the target fidelity.
Lemma~\ref{supp:lem:Jensen} of the Supplemental Material states that if
$\Fid(s,u_N)\ge1-\eps^2$, then for every $\alpha\ge1/2$,
\begin{equation}
 \sum_i s_i^{\alpha}
 \ge N^{1-\alpha}(1-\eps^2)^{\alpha}.
 \label{eq:converse-fidelity}
\end{equation}
Applying this lemma to $s=s_n$, $N=N_n$, and $\alpha=\alpha_n$ supplies the
required bound.  Combining Eqs.~\eqref{eq:converse-power} and
\eqref{eq:converse-fidelity}, and defining
$g(\alpha)=\log\sum_i p_i^\alpha$, gives
\begin{equation}
 n g(\alpha_n)
 \ge (1-\alpha_n)m_n+\alpha_n\log(1-\eps^2).
 \label{eq:converse-log}
\end{equation}
The expansion about order one is
\begin{equation}
 g(1+x)=-Hx+\frac{(\ln2)V}{2}x^2+O(|x|^3).
 \label{eq:converse-cumulant}
\end{equation}
Write
$m_n=nH+b_n\sqrt n$.  Substituting
$\alpha_n=1-t/\sqrt n$ into Eq.~\eqref{eq:converse-log} and using
Eq.~\eqref{eq:converse-cumulant} leads to
\begin{equation}
 b_n\le \frac{(\ln 2)Vt}{2}
 +\frac{-\log(1-\eps^2)}{t}+o(1).
 \label{eq:converse-bn}
\end{equation}
Therefore every feasible sequence satisfies
\begin{equation}
 \limsup_{n\to\infty}b_n
 \le \frac{(\ln 2)Vt}{2}
 +\frac{-\log(1-\eps^2)}{t}
 \qquad(t>0).
 \label{eq:converse-t}
\end{equation}
The right-hand side is minimized at
\begin{equation}
 t_*=\sqrt{\frac{2[-\log(1-\eps^2)]}{(\ln2)V}},
 \label{eq:converse-tstar}
\end{equation}
where its value is
$\sqrt{2(\ln2)V[-\log(1-\eps^2)]}=B(\eps)$.  Thus
$\limsup b_n\le B(\eps)$ for every feasible sequence.  Together with the
achievability construction for every $0<\beta<B(\eps)$, this proves the
converse and hence Eq.~\eqref{eq:main}.

After normalization by $\sqrt V$, all nonuniform sources share the same two
coefficient curves.  Entropy fixes only the extensive rate, whereas
varentropy sets the scale of the second-order finite-block correction.  The boundary
$\eps=1/\sqrt2$ is universal because the unassisted coefficient
$b_0(\eps)=\sqrt V\,\Phi^{-1}(\eps^2)$ changes sign precisely when
$\eps^2=1/2$.  By contrast, $b_{\cat}(\eps)>0$ for every $\eps>0$.
Hence the location of the sign-reversal threshold is independent of the source
$p$; the source-dependent varentropy $V$ only sets the overall scale of the
second-order corrections.

The full-support assumption in Theorem~\ref{thm:main} entails no loss of generality.  If the original Schmidt vector contains zero entries, we restrict the local Hilbert spaces to its Schmidt support.  This leaves the nonzero Schmidt coefficients and the quantities $H$ and $V$ unchanged.  Hence Theorem~\ref{thm:main} applies equally to entangled pure-state sources whose Schmidt vectors are not full support.

\medskip

\emph{Conclusion.---}Our findings demonstrate that entanglement catalysis can fundamentally alter the finite-block laws of quantum state transformations, even when it provides no advantage to the leading-order asymptotic rate. For the concentration of any nonuniform bipartite pure state into nearly perfect Bell pairs, we established the exact second-order finite-size correction under exact catalysis. The result reveals a striking structural property: whereas unassisted protocols suffer a universal performance penalty (a negative second-order correction) for any error tolerance below $1/\sqrt{2}$, catalysis completely overcomes this penalty. The catalytic correction remains strictly positive across this entire error regime. 

This advantage occurs despite the catalyst being returned exactly in its initial state and completely uncorrelated with the system. It highlights an essential operational difference between asymptotic capabilities and finite-resource performance: while the entanglement entropy $H$ sets the first-order rate, the source varentropy $V$ governs a finite-size scale that can be explicitly manipulated by catalytic assistance. 

Our work opens several avenues for future exploration. While we have proven the existence of a catalyst that secures this second-order improvement, quantifying the physical resources required to construct it---specifically bounding its Schmidt rank and entanglement---remains an important open challenge. Characterizing the $o(\sqrt{n})$ remainder, understanding the vanishing-error limit, and extending these results to mixed-state concentration or more general target state transformations will further illuminate the power of finite-size catalysis. Finally, because exact catalysis appears across various resource theories, including quantum coherence and thermodynamics, investigating analogous finite-size sign reversals in those settings may reveal that catalytic resources are fundamentally more powerful in finite regimes than their asymptotic limits imply.

\textit{Acknowledgments.}---This work was supported by the National Science Centre Poland (Grant No. 2024/55/B/ST2/01590).

\emph{Disclosure on the use of artificial intelligence.} The author used OpenAI's GPT 6 Astra to assist with proof development, language editing, manuscript organization, and bibliographic checking. The author critically evaluated the AI-generated suggestions and independently verified the final proofs, references, and conclusions, and takes full responsibility for the content of the manuscript.

\bibliography{references}

\clearpage
\onecolumngrid
\setcounter{secnumdepth}{2}

\begin{center}
  \textbf{\large Supplemental Material}\\[1ex]
\end{center}

\section{Definitions and main statement}

Throughout, $\log$ denotes the base-2 logarithm and $\ln$ denotes the natural logarithm.  For probability vectors $x,y$, with zeros
appended when necessary, define squared fidelity and pure-state trace distance by
\begin{equation}
 \Fid(x,y)=\left(\sum_i\sqrt{x_i y_i}\right)^2,
 \qquad
 \TD(x,y)=\sqrt{1-\Fid(x,y)}.
 \label{supp:eq:distance}
\end{equation}

Let $p=(p_1,\ldots,p_D)$ be a full-support nonuniform probability vector and
let $u_2=(1/2,1/2)$ be the Schmidt vector of one Bell pair.  Denote by
$H=H(p)$ the Shannon entropy of $p$ (equivalently, the entropy of
entanglement of the associated bipartite pure state) and by $V=V(p)$ its
varentropy, or entropy variance.  With the first-order Bell-pair rate $R=H$,
these quantities are
\begin{equation}
 \begin{aligned}
  H&=H(p)=-\sum_{i=1}^D p_i\log p_i,\\
  V&=V(p)=\sum_{i=1}^D p_i\bigl[-\log p_i-H(p)\bigr]^2.
 \end{aligned}
 \label{supp:eq:HVR}
\end{equation}
For $0<\eps<1$, let $M_n^{\cat}(\eps)$ be the largest integer $m$ for which
there exist probability vectors $r_n,s_n$ satisfying
\begin{equation}
 p^{\otimes n}\otimes r_n\prec s_n\otimes r_n,
 \qquad
 \TD(s_n,u_2^{\otimes m})\le\eps.
 \label{supp:eq:conversion}
\end{equation}
We denote by $M_n(\eps)$ the corresponding unassisted optimum, obtained by
requiring $p^{\otimes n}\prec s_n$ with the same error constraint.
\begin{mainthmrestated}
For every nonuniform full-support $p$ and every fixed
$0<\eps<1$,
\begin{equation}
 M_n^{\cat}(\eps)=nH+b_{\cat}(\eps)\sqrt n+o(\sqrt n),
 \label{supp:eq:second-order-expansion}
\end{equation}
where the optimal coefficient is
\begin{equation}
 b_{\cat}(\eps)
 =\lim_{n\to\infty}\frac{M_n^{\cat}(\eps)-nH}{\sqrt n}
 =\sqrt{2(\ln 2)V\log\!\frac{1}{1-\eps^2}}.
 \label{supp:eq:main-formula}
\end{equation}
\end{mainthmrestated}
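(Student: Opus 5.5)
The proof splits into a converse, $\limsup_n (M_n^{\cat}(\eps)-nH)/\sqrt n\le B(\eps)$, and an achievability bound, $\liminf_n (M_n^{\cat}(\eps)-nH)/\sqrt n\ge\beta$ for every $0<\beta<B(\eps)$. Together these give the limit and Eq.~\eqref{supp:eq:second-order-expansion}.

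\emph{Converse.} Take any feasible $(r_n,s_n,m_n)$ and write $N_n=2^{m_n}$ and $m_n=nH+b_n\sqrt n$. First I would prove the fidelity lemma: if $F(s,u_N)\ge1-\eps^2$, then $\sum_i s_i^\alpha\ge N^{1-\alpha}(1-\eps^2)^\alpha$ for $\alpha\ge1/2$. To see this, restrict to the $N$ target coordinates and write $\sum_{i\le N}s_i^\alpha=N\cdot\frac1N\sum_i (\sqrt{s_i})^{2\alpha}$. Since $x\mapsto x^{2\alpha}$ is convex for $2\alpha\ge1$, Jensen gives
$$\sum_{i\le N}s_i^\alpha\ge N\Bigl(\tfrac1N\sum_i\sqrt{s_i}\Bigr)^{2\alpha}=N^{1-\alpha}F^\alpha .$$
Next, set $\alpha_n=1-t/\sqrt n$. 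Schur concavity of $\sum z^{\alpha_n}$ turns the catalytic majorization into a power-sum inequality, and after the catalyst factor $\sum_j r_{n,j}^{\alpha_n}$ cancels this yields Eq.~\eqref{eq:converse-log}. Expanding $g(1-t/\sqrt n)=Ht/\sqrt n+(\ln2)Vt^2/(2n)+O(n^{-3/2})$ gives Eq.~\eqref{eq:converse-bn}. Optimizing over $t$ then gives $\limsup b_n\le B(\eps)$.

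\emph{Achievability.} Use the spiked-flat $s_n$ of Eq.~\eqref{eq:spiked-flat} with $\eta=\eps^2$ and $m_n=\lfloor nH+\beta\sqrt n\rfloor$.
\begin{itemize}
\item The fidelity claim is direct: $F=[\sqrt{\eta/N}+\sqrt{(1-\eta)(1-1/N)}]^2\to 1-\eta$ from above, since the cross term $2\sqrt{\eta(1-\eta)/N}$ dominates the $O(1/N)$ loss.
\item The entropy condition holds because $H(s_n)=(1-\eta)(nH+\beta\sqrt n)+O(1)<nH$ for large $n$.
\item The nonpositive orders are automatic by support size, as in the main text.
\end{itemize}
By the Klimesh--Turgut theorem, it then remains to verify the two power-sum inequalities \emph{uniformly in $\alpha$}, with $n$ fixed but large.

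For $\alpha>1$ we have $\sum s^\alpha\ge\eta^\alpha$ and need $\eta^\alpha>2^{ng(\alpha)}$, where $g(\alpha)<0$ is decreasing. I would split into three ranges:
\begin{itemize}
\item For $\alpha\ge1+\delta$, $g(\alpha)\le-c$, and $\alpha\log\eta>n g(\alpha)$ follows since $g(\alpha)\sim\alpha\log p_{\max}$. Here one must also compare $\eta^\alpha$ with $p_{\max}^{n\alpha}$, which holds because $\alpha\log\eta\gg n\alpha\log p_{\max}$.
\item For $\alpha$ near $1$, use the flat part instead: $\sum s^\alpha\approx(1-\eta)^\alpha N^{1-\alpha}$. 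Compare $-(\alpha-1)(nH+\beta\sqrt n)+\alpha\log(1-\eta)$ with $-(\alpha-1)nH+(\ln2)V(\alpha-1)^2n/2$.
\item In the window $\alpha-1\sim n^{-1/2}$, combine the flat and spike contributions, taking the maximum of the two.
\end{itemize}

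The case $0<\alpha<1$ is where the constraint $\beta<B$ is used. Writing $\alpha=1-x$, we need
$$\eta^{\alpha}+(1-\eta)^\alpha(N_n-1)^{x}<2^{n g(1-x)}.$$
In logs, for $x=\tau/\sqrt n$, the flat term reads $\tau\beta+\log(1-\eta)<(\ln2)V\tau^2/2$ (times $1$, after subtracting $\tau H\sqrt n$). This holds for all $\tau$ precisely because $\beta^2<2(\ln2)V(-\log(1-\eta))=B^2$ makes the quadratic discriminant negative. For $x$ of order one, strict concavity and the gap $g(1-x)>xH+cx^2$ dominate $\beta\sqrt n\,x$. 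For $x\ll n^{-1/2}$, $\log(1-\eta)<0$ wins outright. The spike term $\eta^\alpha\le1$ is negligible against $2^{ng}$, which is exponentially large once $x\gg1/n$; for $x\lesssim1/n$ it must be absorbed jointly with the flat term, using $\eta+(1-\eta)=1$ and a first-order expansion.

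I expect the main obstacle to be this uniform-in-$\alpha$ bookkeeping across the scales $x\ll1/n$, $1/n\lesssim x\lesssim n^{-1/2}$, and $x\sim1$. My first attempt would be a contradiction argument: suppose there is a sequence $\alpha_n$ violating the inequality for infinitely many $n$, pass to a subsequence along which $\sqrt n(1-\alpha_n)$ converges in $[-\infty,\infty]$, and rule out each limit regime using the cumulant expansion \eqref{eq:converse-cumulant} with a uniform Taylor remainder on compact neighborhoods of $\alpha=1$.
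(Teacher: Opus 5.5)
Your converse and your treatment of orders $0<\alpha<1$ follow the paper's proof: the same Jensen lemma, the same choice $\alpha_n=1-t/\sqrt n$, the same spiked-flat output with $\eta=\eps^2$, the same quadratic $(\ln2)V\tau^2/2-\beta\tau+c_\eta$ on the $n^{-1/2}$ scale, and the same joint treatment of spike and flat part for $1-\alpha\lesssim 1/n$. The gap is in your plan for $\alpha>1$, where you assign the wrong term to the wrong regime.

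\emph{The flat part can never win for $\alpha>1$.} Since $H_\alpha(p)<H$ and $\log(N_n-1)\ge nH$ for large $n$, one has $(1-\eta)^\alpha(N_n-1)^{1-\alpha}<2^{-(\alpha-1)nH}<2^{-(\alpha-1)nH_\alpha(p)}=(\sum_i p_i^\alpha)^n$ for every $\alpha>1$. Concretely, after cancelling $-(\alpha-1)nH$, your proposed comparison becomes $\alpha\log(1-\eta)-(\alpha-1)\beta\sqrt n>(\ln2)Vn(\alpha-1)^2/2$. Its left side is negative and its right side is nonnegative, so it fails.

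\emph{The window $\alpha-1\sim n^{-1/2}$ is harmless.} The spike alone settles every regime with $n(\alpha-1)\to\infty$. Away from $\alpha=1$ this uses $\|p\|_\alpha^n\le q^n<\eta\le(\sum_i s_{n,i}^\alpha)^{1/\alpha}$. Near $\alpha=1$ it uses $ng(\alpha)\approx -Hn(\alpha-1)\to-\infty$ against $\eta^\alpha\to\eta$. This is why $\alpha>1$ imposes no condition on $\beta$.

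\emph{The regime that needs work is $\lambda_n=n(\alpha_n-1)=O(1)$.} None of your three ranges covers it, and taking the maximum of the two contributions fails there. If $\lambda_n\to\lambda>0$, the source power sum tends to $2^{-\lambda H}$, the spike to $\eta$, and the flat part to $(1-\eta)2^{-\lambda H}$. So whenever $2^{-\lambda H}>\eta$, neither term alone exceeds the source. You must use the full sum:
\begin{itemize}
\item for $\lambda_n\to\lambda>0$, the limit inequality $\eta+(1-\eta)2^{-\lambda H}>2^{-\lambda H}$;
\item for $\lambda_n\to0$, a first-order expansion, $\log\sum_i s_{n,i}^{\alpha_n}=-(1-\eta)H\lambda_n+o(\lambda_n)>-H\lambda_n+o(\lambda_n)=ng(\alpha_n)$.
\end{itemize}
This is the mirror image of what you already planned for $1-\alpha\lesssim1/n$.

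For the same reason, your closing contradiction argument cannot be organized by the limit of $\sqrt n(1-\alpha_n)$ alone. The case $\sqrt n(1-\alpha_n)\to0$ lumps together $n|1-\alpha_n|\to\infty$, $n|1-\alpha_n|\to\lambda>0$, and $n|1-\alpha_n|\to0$. These need different arguments on both sides of $\alpha=1$, so you must also pass to subsequences in $\lambda_n=n|1-\alpha_n|$. (Minor: the bound $g(1-x)>xH$ comes from strict convexity of $g$, i.e.\ the tangent line at $\alpha=1$, not from concavity.)
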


For comparison, without catalysis the optimal approximate pure-state conversion
is characterized by Ref.~\cite{VidalJonathanNielsen2000}, and the established
second-order expansion of $M_n(\eps)$ has coefficient
\cite{DattaLeditzky2015,DattaLeditzkyCorrection2017,KumagaiHayashi2017,ChubbTomamichelKorzekwa2018}
\begin{equation}
 b_0(\eps)=\sqrt V\,\Phi^{-1}(\eps^2).
 \label{supp:eq:uncat-formula}
\end{equation}
Here $\Phi$ denotes the cumulative distribution function of the standard normal distribution.

\begin{corollary}
\label{supp:cor:sign-reversal}
For every nonuniform full-support $p$ and every
$0<\eps<1/\sqrt2$,
\begin{equation*}
 b_0(\eps)<0<b_{\cat}(\eps).
\end{equation*}
\end{corollary}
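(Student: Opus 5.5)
The corollary follows by comparing the two explicit coefficient formulas, Eq.~\eqref{supp:eq:main-formula} for $b_{\cat}$ and Eq.~\eqref{supp:eq:uncat-formula} for $b_0$. No further analysis of catalytic majorization is needed; the content lies in Theorem~\ref{thm:main} and the cited unassisted expansion. The only structural input is that $V>0$, and this is where nonuniformity of $p$ enters. Since $V$ is the variance of the random variable $-\log p_I$ with $I\sim p$, we have $V=0$ iff $-\log p_i$ is constant on the support. Because $p$ has full support, that would force $p_i=1/D$ for all $i$, contradicting nonuniformity. Hence $V>0$ and $\sqrt V>0$.

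For the catalytic side, fix $0<\eps<1/\sqrt2$. Then $0<\eps^2<1/2<1$, so $1-\eps^2\in(1/2,1)$ and $1/(1-\eps^2)>1$. It follows that $\log\frac{1}{1-\eps^2}>0$. Combined with $\ln2>0$ and $V>0$, the radicand $2(\ln2)V\log\frac1{1-\eps^2}$ is strictly positive, so $b_{\cat}(\eps)>0$. In fact this holds for every $0<\eps<1$, as the main text remarks.

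For the unassisted side, $\Phi$ is the standard normal CDF. It is continuous and strictly increasing on $\mathbb R$, with $\Phi(0)=1/2$ by symmetry, so $\Phi^{-1}:(0,1)\to\mathbb R$ is strictly increasing with $\Phi^{-1}(1/2)=0$. Since $0<\eps^2<1/2$, we get $\Phi^{-1}(\eps^2)<\Phi^{-1}(1/2)=0$. Multiplying by $\sqrt V>0$ gives $b_0(\eps)<0$. Chaining the two inequalities yields $b_0(\eps)<0<b_{\cat}(\eps)$.

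There is essentially no obstacle. The one point to handle carefully is the strict positivity of $V$, since without it both coefficients would vanish and the strict inequalities would fail; the full-support nonuniformity hypothesis is exactly what rules this out. One should also note that the argument presupposes the validity of Eq.~\eqref{supp:eq:uncat-formula} for nonuniform $p$ and fixed $\eps\in(0,1)$, which is taken from the cited literature.
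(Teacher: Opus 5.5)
Your proof is correct and follows the paper's own reasoning: the corollary is read off from the explicit formulas, since $b_0(\eps)=\sqrt V\,\Phi^{-1}(\eps^2)$ changes sign exactly at $\eps^2=1/2$ and $b_{\cat}(\eps)>0$ for every $0<\eps<1$. You also check explicitly that nonuniformity of a full-support $p$ forces $V>0$, which supplies the one detail the paper leaves implicit.
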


The full-support restriction entails no loss of generality.  If the source
Schmidt vector contains zero entries, we restrict the local Hilbert spaces to
its Schmidt support.  This leaves the nonzero Schmidt coefficients, the LOCC
conversion problem, and the quantities $H$ and $V$ unchanged.  Thus $D$ may
be taken to be the Schmidt rank and $p_i>0$ for every $i=1,\ldots,D$.

\section{Majorization and catalytic majorization}

Throughout this work, $x=(x_1,\ldots,x_D)$ and $y=(y_1,\ldots,y_D)$
are \emph{probability vectors}: their components are nonnegative and
\begin{equation}
 \sum_{i=1}^D x_i=\sum_{i=1}^D y_i=1.
 \label{supp:eq:normalization}
\end{equation}
If two probability vectors initially have different dimensions, zeros may be
appended to the shorter vector so that they are compared in a common
dimension.  We write $x^\downarrow$ for the vector obtained by arranging the
components of $x$ in nonincreasing order.

The majorization relation $x\prec y$ means
\begin{equation}
 \sum_{i=1}^k x_i^\downarrow
 \le \sum_{i=1}^k y_i^\downarrow
 \qquad (1\le k<D).
 \label{supp:eq:maj}
\end{equation}
At $k=D$ the two sums are equal one.
Nielsen's theorem states that a bipartite pure state with squared Schmidt
vector $x$ can be converted deterministically by LOCC into one with squared
Schmidt vector $y$ exactly when $x\prec y$ \cite{Nielsen1999}.  More
explicitly, in Schmidt bases,
\begin{equation*}
 \begin{aligned}
  \lvert\psi_x\rangle&=\sum_{i=1}^D\sqrt{x_i}\,
   \lvert i\rangle_A\lvert i\rangle_B,
  &
  \lvert\psi_y\rangle&=\sum_{i=1}^D\sqrt{y_i}\,
   \lvert i\rangle_A\lvert i\rangle_B,\\[-1mm]
  \lvert\psi_x\rangle&\xrightarrow{\mathrm{LOCC}}\lvert\psi_y\rangle
  &\Longleftrightarrow\quad x&\prec y.
 \end{aligned}
\end{equation*}

We write $x\prec_T y$ if there exists a finite-dimensional probability vector
$r$ such that
\begin{equation}
 x\otimes r\prec y\otimes r.
 \label{supp:eq:trumping-definition}
\end{equation}
For bipartite pure-state transformations this has a direct operational
meaning.  Let $\lvert\chi_r\rangle$ be a bipartite pure state whose squared
Schmidt vector is $r$.  By Nielsen's theorem, Eq.~\eqref{supp:eq:trumping-definition}
is equivalent to the existence of the deterministic LOCC transformation
\begin{equation*}
 \lvert\psi_x\rangle\otimes\lvert\chi_r\rangle
 \xrightarrow{\mathrm{LOCC}}
 \lvert\psi_y\rangle\otimes\lvert\chi_r\rangle.
\end{equation*}
Thus the auxiliary entangled state $\lvert\chi_r\rangle$ is a catalyst: the
same state occurs before and after the transformation.  The relation $\prec_T$ is also called \emph{trumping}
\cite{DaftuarKlimesh2001,Klimesh2007,Turgut2007}.

\subsection{Klimesh--Turgut characterization}

In the following, it will be useful to use an alternative characterization of trumping, obtained by Klimesh and Turgut \cite{Klimesh2007,Turgut2007}. For a probability vector
$x$, define the Klimesh function
\begin{equation}
 f_\alpha(x)=
 \begin{cases}
  \displaystyle \log\!\sum_i x_i^\alpha, & \alpha>1,\\[1mm]
  \displaystyle \sum_i x_i\log x_i=-H(x), & \alpha=1,\\[1mm]
  \displaystyle -\log\!\sum_i x_i^\alpha, & 0<\alpha<1,\\[1mm]
  \displaystyle -\sum_i\log x_i, & \alpha=0,\\[1mm]
  \displaystyle \log\!\sum_i x_i^\alpha, & \alpha<0.
 \end{cases}
 \label{supp:eq:Klimesh-functions}
\end{equation}
For $\alpha\le0$, if a vector has a zero component, the corresponding
Klimesh function is defined to be $+\infty$.

\begin{theorem}[Klimesh--Turgut~\cite{Klimesh2007,Turgut2007}]
\label{supp:thm:KT}
Let $x$ and $y$ be finite probability vectors, padded with zeros when
necessary so that they have the same dimension.  Suppose that they are not
permutations of one another, and suppose that $x$ and $y$ do not both contain
zero components.  Then
$x\prec_T y$ if and only if
\begin{equation}
 f_\alpha(x)<f_\alpha(y)
 \qquad\text{for every }\alpha\in\mathbb R.
 \label{supp:eq:KT-all}
\end{equation}
\end{theorem}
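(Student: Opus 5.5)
Since the statement is the cited Klimesh--Turgut characterization, I would prove its two directions separately, and the sufficiency direction carries all the difficulty.

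\emph{Necessity.} Suppose $x\otimes r\prec y\otimes r$. Two facts give the inequalities.
\begin{itemize}
\item Each $f_\alpha$ is Schur convex, in the sense that $a\prec b$ implies $f_\alpha(a)\le f_\alpha(b)$. The sign conventions in Eq.~\eqref{supp:eq:Klimesh-functions} are chosen exactly so that this holds. For $\alpha>1$ and $\alpha<0$ the map $t\mapsto t^\alpha$ is convex, and for $0<\alpha<1$ it is concave. For $\alpha=0$ and $\alpha=1$ the functions $-\log t$ and $t\log t$ are convex.
\item Each $f_\alpha$ is additive under tensor products: $f_\alpha(a\otimes r)=f_\alpha(a)+f_\alpha(r)$. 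This holds because the power sums factorize.
\end{itemize}
Cancelling the finite quantity $f_\alpha(r)$ gives $f_\alpha(x)\le f_\alpha(y)$ for every $\alpha$. For $\alpha\le0$, $f_\alpha(r)$ is finite only if $r$ has full support, and one may assume this after discarding zero components of $r$.

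To upgrade to strict inequality, I would use that every $f_\alpha$ with $\alpha\neq 0,1$ is a strictly Schur-convex function applied to $x\otimes r$. Equality would then force $x\otimes r$ and $y\otimes r$ to be permutations of each other. Comparing the power sums $\sum_i x_i^\alpha$ and $\sum_i y_i^\alpha$ for all $\alpha$ (these determine the multiset of entries) then forces $x$ and $y$ to be permutations, which is excluded. At $\alpha=0,1$, equality of $f_\alpha$ at an isolated point together with $f_\beta(x)\le f_\beta(y)$ nearby would force equality of derivatives. I would handle this by the same strict-convexity argument. The hypothesis that $x,y$ do not both contain zeros ensures the $\alpha\le0$ comparisons are meaningful.

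\emph{Sufficiency.} My plan is to pass through multiple-copy majorization. The first step is to show that strict inequalities for all $\alpha$ imply $x^{\otimes n}\prec y^{\otimes n}$ for some finite $n$, possibly after tensoring with a suitable fixed auxiliary vector to smooth ties. Once this holds, the Duan--Feng--Li--Ying construction \cite{DuanFengLiYing2005} produces an exact catalyst,
\[
 r=\bigoplus_{k=0}^{n-1} x^{\otimes k}\otimes y^{\otimes(n-1-k)}\quad\text{(normalized)}.
\]
A direct telescoping check shows $x\otimes r\prec y\otimes r$, using $x^{\otimes n}\prec y^{\otimes n}$ for the single mismatched block.

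To obtain multiple-copy majorization, I would compare the Lorenz curves of $x^{\otimes n}$ and $y^{\otimes n}$ via large deviations. The log-moment generating functions of $-\log x_I$ and $-\log y_I$ under the tensor-product measures are exactly $\alpha\mapsto\log\sum_i x_i^\alpha$ and its $y$-counterpart. By Cramér's theorem, the partial sums $\sum_{k\le 2^{n\lambda}} (x^{\otimes n})^\downarrow_k$ behave like $2^{-n I_x(\lambda)}$, with rate functions given by Legendre transforms of these generating functions. The strict inequalities $f_\alpha(x)<f_\alpha(y)$ then translate into strict domination of the relevant rate functions in every regime:
\begin{itemize}
\item $\alpha>1$ controls the head of the Lorenz curve;
\item $0<\alpha<1$ controls the region of the bulk near full mass;
\item $\alpha=1$ controls the typical point, via entropy;
\item $\alpha\le 0$ controls the tail and the support size.
\end{itemize}

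I expect the main obstacle to be turning this exponential-scale domination into \emph{exact} inequalities $\sum_{i\le k}(x^{\otimes n})^\downarrow_i\le\sum_{i\le k}(y^{\otimes n})^\downarrow_i$ for every single $k$. This requires uniform control of polynomial prefactors (for instance via Bahadur--Rao refinements or a Berry--Esseen type bound near the typical point). It also requires separate treatment of three regimes where the gap degenerates: near $\alpha\to\pm\infty$ (largest and smallest entries), near $\alpha=0$ (support sizes), and near $\alpha=1$. I would attack these by a compactness argument on the extended parameter range $[-\infty,\infty]$, using strictness to extract a uniform positive margin. Where the margin is lost at the edges, I would tensor both sides with a fixed full-support, slightly flattened auxiliary vector to restore a margin. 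If the direct route fails, I would fall back on Turgut's approach, which constructs the catalyst's generating polynomial via positivity of the ratio of Laplace transforms.
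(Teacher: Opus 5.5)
The paper does not prove this theorem. It quotes it from \cite{Klimesh2007,Turgut2007} and uses it only in the special case of Eqs.~\eqref{supp:eq:KT-special-high}--\eqref{supp:eq:KT-special-low}.

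Your necessity direction is fine, up to two small points. First, $f_0$ is not additive: it obeys $f_0(a\otimes r)=d_r f_0(a)+d_a f_0(r)$, with $d_a,d_r$ the dimensions. This still transfers the inequality, because $x$ and $y$ share the padded dimension. Second, $\alpha=0,1$ need no separate treatment, since $-\log t$ and $t\log t$ are strictly convex.

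The sufficiency direction has a genuine gap. Its first step, that the strict inequalities give $x^{\otimes n}\prec y^{\otimes n}$ for some $n$, is false. Take $x=(0.5,0.2,0.2,0.05,0.05)=(0.5)\oplus\tfrac12(0.4,0.4,0.1,0.1)$ and $y=(0.5,0.25,0.125,0.125,0)=(0.5)\oplus\tfrac12(0.5,0.25,0.25,0)$. With the Jonathan--Plenio catalyst $r=(0.6,0.4)$ \cite{JonathanPlenio1999}, the vectors $x\otimes r$ and $y\otimes r$ share the block $0.5\,r$. Their remaining blocks are half of the Jonathan--Plenio pair, so $x\otimes r\prec y\otimes r$. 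The pair satisfies the hypotheses ($y$ has a zero, $x$ does not), so by your own necessity argument every strict inequality holds.

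Now fix $k\ge1$ and put $t_k=0.14\cdot2^{-(k-1)}$. The entries of $x^{\otimes k}$ above $t_k$ are $2^{-k}$ and $2k$ copies of $0.2\cdot2^{-(k-1)}$. Those of $y^{\otimes k}$ are $2^{-k}$ and $k$ copies of $0.25\cdot2^{-(k-1)}$. All remaining entries of both are at most $0.125\cdot2^{-(k-1)}<t_k$. Hence
$\sum_i\max\{(x^{\otimes k})_i-t_k,0\}-\sum_i\max\{(y^{\otimes k})_i-t_k,0\}=0.01\,k\,2^{-(k-1)}>0$.
So $x^{\otimes k}\nprec y^{\otimes k}$ for every $k$, and no catalyst of the form in \cite{DuanFengLiYing2005} exists for this pair.

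The culprit is the tie $x^\downarrow_1=y^\downarrow_1$. In a window of thresholds below the common top entry, $k$ copies only reproduce $k$ copies of the one-copy comparison. A local violation of $x\prec y$ there is therefore never repaired by taking more copies, although a catalyst can smear it out.

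In your large-deviation language, the margins at $\alpha=\pm\infty$, normalized by $|\alpha|$, are $\log(y^\downarrow_1/x^\downarrow_1)$ and $\log(\min_i x_i/\min_i y_i)$. Strictness at every finite $\alpha$ does not make them positive, so the compactness argument on $[-\infty,\infty]$ has no uniform margin to extract. Tensoring both sides with a common $z$ cannot create one, because $(x\otimes z)^\downarrow_1=x^\downarrow_1z^\downarrow_1=(y\otimes z)^\downarrow_1$, and likewise for the smallest entries. Without a concrete recipe, asking for some $z$ with $(x\otimes z)^{\otimes n}\prec(y\otimes z)^{\otimes n}$ is circular, since for $n=1$ it is the original problem.

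At best, your route proves the theorem under the extra hypotheses $x^\downarrow_1<y^\downarrow_1$ and $\min_i x_i>\min_i y_i$. These do hold in the paper's application: the top entry $\eta$ of $s_n$ eventually exceeds $(\max_i p_i)^n$, and the padded $s_n$ has a zero. Even there, the passage from exponential-scale domination of Lorenz curves to exact inequalities at every index is only announced, not carried out.

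The general statement requires a catalyst not of multiple-copy type, which the original proofs of Klimesh and Turgut construct by a different argument. Naming Turgut's method as a fallback points to that proof rather than supplying it.
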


\subsection{Special case used below}

In the remainder of this Supplemental Material we need only a particularly
simple specialization of Theorem~\ref{supp:thm:KT}.  The source vector will
be
\begin{equation}
 x=p^{\otimes n},
\end{equation}
where $p$ has full support.  Hence every component of $x$ is strictly
positive.  Equivalently, the corresponding bipartite pure source has full
Schmidt rank on its $D^n$-dimensional Schmidt support.

The constructed output $y=s_n$ will have strictly fewer than $D^n$ nonzero
components.  We therefore pad $s_n$ with zeros to the same dimension as
$p^{\otimes n}$.  In this situation the conditions of nonpositive order are
automatic.  Indeed, because $x$ has full support,
\begin{equation}
 f_\alpha(x)<\infty
 \qquad (\alpha\le0),
\end{equation}
whereas the zero-padded output obeys
\begin{equation}
 f_\alpha(y)=+\infty
 \qquad (\alpha\le0).
\end{equation}
Thus
\begin{equation}
 f_\alpha(x)<f_\alpha(y)
 \qquad\text{for every }\alpha\le0.
\end{equation}
The same support difference also ensures
that $x$ and $y$ cannot be permutations of one another.

Consequently, catalytic
majorization in the present proof reduces to verifying only the conditions for $\alpha > 0$, i.e.,
\begin{align}
 \sum_i s_{n,i}^\alpha
 &>\left(\sum_i p_i^\alpha\right)^n,
 &&\alpha>1,
 \label{supp:eq:KT-special-high}\\
 H(s_n)&<nH(p),
 \label{supp:eq:KT-special-one}\\
 \sum_i s_{n,i}^\alpha
 &<\left(\sum_i p_i^\alpha\right)^n,
 &&0<\alpha<1.
 \label{supp:eq:KT-special-low}
\end{align}
Once Eqs.~\eqref{supp:eq:KT-special-high}--\eqref{supp:eq:KT-special-low}
are proved, Theorem~\ref{supp:thm:KT} guarantees a finite probability vector
$r_n$ such that
\begin{equation}
 p^{\otimes n}\otimes r_n\prec s_n\otimes r_n.
 \label{supp:eq:KT-special-conclusion}
\end{equation}
Nielsen's theorem then turns this majorization relation into the deterministic
LOCC transformation used in the main result.

\section{Construction of the target state}

Fix $0<\eta<1$ and a real number $\beta$.  Let
\begin{equation}
 m_n=\left\lfloor Rn+\beta\sqrt n\right\rfloor,
 \qquad
 N_n=2^{m_n},
 \label{supp:eq:mn}
\end{equation}
and define
\begin{equation}
 s_n=s_{N_n,\eta}:=
 \left(\eta,
 \underbrace{\frac{1-\eta}{N_n-1},\ldots,
 \frac{1-\eta}{N_n-1}}_{N_n-1\text{ entries}}
 \right).
 \label{supp:eq:sn}
\end{equation}
For all sufficiently large $n$, the first component is the largest.

\subsection{Fidelity, entropy, and support}

For any positive integer $N$, let
\begin{equation*}
 u_N:=\left(\frac1N,\ldots,\frac1N\right)
\end{equation*}
denote the uniform probability vector on $N$ components.  In particular, if
$N=2^m$, then $u_N=u_2^{\otimes m}$ is the squared Schmidt vector of $m$
Bell pairs.

\begin{lemma}
\label{supp:lem:fidelity}
For every fixed $0<\eta<1$,
\begin{equation}
 \Fid(s_{N,\eta},u_N)=
 \frac{\left[\sqrt\eta+\sqrt{(N-1)(1-\eta)}\right]^2}{N}
 \label{supp:eq:fid-exact}
\end{equation}
and
\begin{equation}
 \Fid(s_{N,\eta},u_N)>1-\eta
 \label{supp:eq:fid-bound}
\end{equation}
for all sufficiently large $N$.
\end{lemma}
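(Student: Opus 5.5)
The plan is a direct computation followed by an elementary asymptotic comparison. Both vectors live on the same $N$ components: $s_{N,\eta}$ has first entry $\eta$ and $N-1$ entries equal to $(1-\eta)/(N-1)$, while $u_N$ has all entries $1/N$. By the definition of the squared fidelity in Eq.~\eqref{supp:eq:distance}, I would compute
\begin{equation*}
 \sum_i\sqrt{s_i u_i}
 =\sqrt{\frac{\eta}{N}}+(N-1)\sqrt{\frac{1-\eta}{(N-1)N}}
 =\frac{\sqrt\eta+\sqrt{(N-1)(1-\eta)}}{\sqrt N}.
\end{equation*}
Squaring this gives Eq.~\eqref{supp:eq:fid-exact} immediately.

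For the bound in Eq.~\eqref{supp:eq:fid-bound}, I would expand the square:
\begin{equation*}
 \Fid=\frac{\eta+(N-1)(1-\eta)+2\sqrt{\eta(1-\eta)(N-1)}}{N}.
\end{equation*}
Subtracting $1-\eta=N(1-\eta)/N$ leaves
\begin{equation*}
 \Fid-(1-\eta)=\frac{\eta-(1-\eta)+2\sqrt{\eta(1-\eta)(N-1)}}{N}
 =\frac{2\eta-1+2\sqrt{\eta(1-\eta)(N-1)}}{N}.
\end{equation*}
Since $0<\eta<1$ is fixed, the term $2\sqrt{\eta(1-\eta)(N-1)}$ grows like $\sqrt N$ and is strictly positive. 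The constant $2\eta-1$ is bounded below by $-1$. The numerator is therefore strictly positive as soon as
\begin{equation*}
 2\sqrt{\eta(1-\eta)(N-1)}>1-2\eta,
\end{equation*}
which holds for all $N>1+(1-2\eta)^2/[4\eta(1-\eta)]$. This proves Eq.~\eqref{supp:eq:fid-bound} for all sufficiently large $N$, with an explicit threshold.

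There is no real obstacle here. The only care needed is in the algebraic subtraction, namely writing $1-\eta$ over the common denominator $N$ so that the $N$-linear terms cancel exactly and the $\sqrt N$ cross term dominates. As a remark, when $\eta\ge 1/2$ the inequality holds for every $N\ge2$.
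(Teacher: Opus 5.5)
Your proposal is correct and follows the same route as the paper: the direct computation of $\sum_i\sqrt{s_iu_i}$ gives Eq.~\eqref{supp:eq:fid-exact}, and subtracting $1-\eta$ gives the numerator $2\eta-1+2\sqrt{\eta(1-\eta)(N-1)}$, which is eventually positive. Your explicit threshold $N>1+(1-2\eta)^2/[4\eta(1-\eta)]$ and your remark that the bound holds for every $N\ge2$ when $\eta\ge1/2$ are both correct refinements of the paper's ``eventually positive.''
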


\begin{proof}
Equation~\eqref{supp:eq:fid-exact} is direct.  Subtracting $1-\eta$ gives
\begin{equation}
 \Fid(s_{N,\eta},u_N)-(1-\eta)
 =\frac{2\eta-1+2\sqrt{\eta(1-\eta)(N-1)}}{N},
 \label{supp:eq:fid-diff}
\end{equation}
which is eventually positive.
\end{proof}

We use $h_2(q)=-q\log q-(1-q)\log(1-q)$ for the binary entropy.

\begin{lemma}
\label{supp:lem:entropy-support}
For fixed $0<\eta<1$ and fixed $\beta$,
\begin{equation}
 H(s_n)=h_2(\eta)+(1-\eta)\log(N_n-1)<nH
 \label{supp:eq:entropy-gap}
\end{equation}
for all sufficiently large $n$.  Moreover $N_n<D^n$ eventually, so $s_n$
padded to dimension $D^n$ has a zero component.
\end{lemma}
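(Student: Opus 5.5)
The plan is to compute $H(s_n)$ directly from the definition in Eq.~\eqref{supp:eq:sn} and then compare it with $nH$ using only the size of $\log N_n$. The spike contributes $-\eta\log\eta$. The $N_n-1$ flat entries each equal $(1-\eta)/(N_n-1)$, so together they contribute
\begin{equation*}
-(1-\eta)\log\frac{1-\eta}{N_n-1}=-(1-\eta)\log(1-\eta)+(1-\eta)\log(N_n-1).
\end{equation*}
Adding the two contributions gives exactly $h_2(\eta)+(1-\eta)\log(N_n-1)$, which is the identity in Eq.~\eqref{supp:eq:entropy-gap}.

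For the strict inequality, bound $\log(N_n-1)<m_n\le Rn+\beta\sqrt n$, where $R=H$. This yields
\begin{equation*}
H(s_n)<h_2(\eta)+(1-\eta)\bigl(nH+\beta\sqrt n\bigr).
\end{equation*}
Subtracting $nH$ leaves $-\eta nH+(1-\eta)\beta\sqrt n+h_2(\eta)$. Since $p$ is nonuniform, it has at least two nonzero components, so $H>0$. Together with $\eta>0$, the linear term $-\eta Hn$ dominates the $O(\sqrt n)$ and $O(1)$ terms for any fixed real $\beta$. Hence $H(s_n)<nH$ for all sufficiently large $n$.

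For the support claim, nonuniformity of the full-support vector $p$ gives the strict bound $H<\log D$. We have $\log N_n=m_n\le nH+\beta\sqrt n$, while $\log D^n=n\log D$. Their difference is at least $n(\log D-H)-\beta\sqrt n$, which tends to $+\infty$, so $N_n<D^n$ eventually. Then $s_n$ has only $N_n<D^n$ nonzero entries, and padding it to dimension $D^n$ necessarily introduces at least one zero component.

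No step here is a real obstacle. The only point needing care is that both strict inequalities come from a gap that is linear in $n$: $\eta H n$ in the entropy bound and $n(\log D-H)$ in the support bound. Each gap beats the $\beta\sqrt n$ correction regardless of the sign or size of $\beta$, and each relies on $p$ being nonuniform, which is what guarantees $H>0$ and $H<\log D$.
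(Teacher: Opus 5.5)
Your proposal is correct and matches the paper's proof: you compute $H(s_n)$ directly, then let the linear gaps $\eta H n$ and $n(\log D-H)$ dominate the $O(\sqrt n)$ corrections. The only difference is cosmetic. You use the one-sided bounds $\log(N_n-1)<m_n\le nH+\beta\sqrt n$ (valid once $N_n\ge 2$) where the paper writes two-sided expansions with $O(1)$ and $o(1)$ remainders, which slightly streamlines the argument.
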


\begin{proof}
We first prove the entropy statement.  The vector $s_n$ has one component
$\eta$ and $N_n-1$ components equal to $(1-\eta)/(N_n-1)$.  Therefore
\begin{align*}
 H(s_n)
 &= -\eta\log\eta
    -(N_n-1)\frac{1-\eta}{N_n-1}
      \log\frac{1-\eta}{N_n-1} \\
 &= h_2(\eta)+(1-\eta)\log(N_n-1),
\end{align*}
which proves the identity in Eq.~\eqref{supp:eq:entropy-gap}.

Next, $N_n=2^{m_n}$ and
$m_n=Rn+\beta\sqrt n+O(1)$, where the $O(1)$ term accounts for the floor.
Because $N_n=2^{m_n}$ and $R=H$, we obtain
\[
 \log N_n=m_n
 =nH+\beta\sqrt n+O(1).
\]
Since $N_n$ grows exponentially with $n$, in particular $N_n\to\infty$.
Using base-2 logarithms,
\[
 \begin{aligned}
 \log(N_n-1)
 &=\log N_n+\log\!\left(1-\frac1{N_n}\right)\\
 &=\log N_n+O(N_n^{-1})
 =\log N_n+o(1),
 \end{aligned}
\]
where we used $\log(1-z)=O(z)$ as $z\to0$.  Therefore
\begin{equation}
 \log(N_n-1)=nH+\beta\sqrt n+O(1).
 \label{supp:eq:logN-asymptotic}
\end{equation}
Collecting these results we have
\[
 H(s_n)
 =(1-\eta)nH
 +(1-\eta)\beta\sqrt n
 +h_2(\eta)+O(1)
 =(1-\eta)nH+O(\sqrt n).
\]
The strict inequality in Eq.~\eqref{supp:eq:entropy-gap} is most transparent
after subtracting from the source entropy.  The preceding expansion gives
\[
 nH-H(s_n)=\eta nH+O(\sqrt n).
\]
Because $\eta>0$ and $H>0$ are fixed, the positive term $\eta nH$ grows
linearly in $n$ and therefore dominates the $O(\sqrt n)$ correction.  Thus
$H(s_n)<nH$ for all sufficiently large $n$.

For the support statement, $p$ has $D$ strictly positive components, so
$p^{\otimes n}$ has support size $D^n$.  Since $p$ is nonuniform,
$H<\log D$; set $\Delta=\log D-H>0$.  Then
\[
 \begin{split}
 n\log D-\log N_n
 &=n(\log D-H)-\beta\sqrt n+O(1)\\
 &=n\Delta+O(\sqrt n)>0
 \end{split}
\]
for all sufficiently large $n$.  Hence $N_n<D^n$.  After padding $s_n$ to
dimension $D^n$, it therefore has at least one zero component.
\end{proof}

\subsection{Uniform all-order power-sum inequalities}

Let
\begin{equation}
 g(\alpha)=\log\sum_{i=1}^D p_i^\alpha.
 \label{supp:eq:g}
\end{equation}
The function is analytic, with
\begin{equation}
 g(1)=0,
 \qquad
 g'(1)=-H,
 \qquad
 g''(1)=(\ln 2)V.
 \label{supp:eq:g-derivatives}
\end{equation}
Thus, as $x\to0$,
\begin{equation}
 g(1+x)=-Hx+\frac{(\ln 2)Vx^2}{2}+O(|x|^3).
 \label{supp:eq:cumulant}
\end{equation}
Because $p$ is nonuniform, its R\'enyi entropy
\begin{equation}
 H_\alpha(p)=\frac{g(\alpha)}{1-\alpha}
 \label{supp:eq:renyi}
\end{equation}
is strictly decreasing in $\alpha$ \cite{Renyi1961}.

\begin{lemma}[Orders below one]
\label{supp:lem:low}
Let
\begin{equation}
 c_\eta=-\log(1-\eta)>0,
 \qquad
 B_\eta=\sqrt{2(\ln 2)Vc_\eta}.
 \label{supp:eq:Beta}
\end{equation}
If $\beta<B_\eta$, then there exists a single integer $N$ such that, for
every $n\ge N$ and simultaneously for every order $0<\alpha<1$,
\begin{equation}
 \sum_i s_{n,i}^\alpha
 <\left(\sum_i p_i^\alpha\right)^n.
 \label{supp:eq:low-all}
\end{equation}
\end{lemma}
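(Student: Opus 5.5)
The plan is to compare the two sides on a logarithmic scale after the change of variable $x=1-\alpha\in(0,1)$. Write $L_n(x)=\eta^{1-x}+(N_n-1)^{x}(1-\eta)^{1-x}$ for the left-hand side and $2^{ng(1-x)}$ for the right-hand side. I will show that $L_n(x)2^{-ng(1-x)}<1$ uniformly in $x$ for all large $n$. At $x=0$ both sides equal $1$, so the difficulty is concentrated near $\alpha=1$. I split $(0,1)$ into three regimes: $x\le c/n$, $c/n\le x\le\delta$, and $x\ge\delta$. The constants $c,\delta>0$ are independent of $n$ and are fixed at the end. The only inputs are Eq.~\eqref{supp:eq:cumulant}, the monotonicity of $H_\alpha$ (so $g(1-x)\ge xH$), and Lemma~\ref{supp:lem:entropy-support}.

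\emph{Regime $x\ge\delta$.} Here $g(1-x)=xH_{1-x}(p)\ge x H_{1-\delta}(p)$, and $H_{1-\delta}(p)>H$ strictly because $p$ is nonuniform. The flat part is at most $2^{x m_n}=2^{x(nH+\beta\sqrt n+O(1))}$. Its ratio to the right-hand side is therefore at most $2^{-xn(H_{1-\delta}-H)+O(\sqrt n)}\to0$ uniformly in $x\ge\delta$. The spike ratio is at most $2^{-nxH}\le2^{-n\delta H}\to0$. Hence the sum of the two ratios is eventually below $1$.

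\emph{Regime $c/n\le x\le\delta$.} By Eq.~\eqref{supp:eq:cumulant} and analyticity, choose $\delta$ small enough that $ng(1-x)\ge nHx+n\kappa x^2$ on $[0,\delta]$, where $\kappa=(\ln2)V/2-C\delta$. Since $\beta<B_\eta$, $\delta$ can also be chosen so that $\beta_+^2/(4\kappa)<c_\eta$. Put $y=x\sqrt n$. The flat ratio is bounded by
\[
(1-\eta)^{1-x}2^{\beta y-\kappa y^2+O(x)}\le 2^{-c_\eta+\beta_+^2/(4\kappa)+O(\delta)},
\]
which is a constant $\theta<1$ once $\delta$ is small. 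The spike ratio is at most $\eta\,2^{-nHx}$. When $nx\to\infty$ the spike ratio is negligible, so the sum is below $1$. When $nx$ stays bounded, $y=O(n^{-1/2})$ is small, and the sum is $1-x\bigl[\eta nH\ln 2-(1-\eta)\beta\sqrt n\ln2+O(1)\bigr]+O(nx^2)$. This is below $1$ because the $\eta nH$ term dominates.

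The honest route is to combine these cases by taking $c$ large. For $x\ge c/n$ the spike ratio is at most $\eta2^{-cH}$. Then $\theta'+\eta2^{-cH}<1$ is needed, where $\theta'$ is the refined flat bound: the flat ratio itself tends to $1-\eta$ as $y\to0$. A convenient way to handle the crossover is to compare with the first-order expansion rather than the crude constant bound. I expect this to be manageable but fiddly.

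\emph{Regime $x\le c/n$.} Let $F_n(x)=\log L_n(x)-ng(1-x)$. Then $F_n(0)=0$, and $F_n'(0)=H(s_n)-nH=-\eta nH+O(\sqrt n)$ by Lemma~\ref{supp:lem:entropy-support}. The second derivative of $\log\sum_i s_{n,i}^{1-x}$ is a variance of $\log s_{n,i}$ under an escort distribution. It is therefore at most $(\log(\eta(N_n-1)/(1-\eta)))^2=O(n^2)$, while $-ng''\le0$. Taylor's theorem gives $F_n(x)\le -\eta nHx/2+C'n^2x^2<0$ for $0<x\le c/n$ with $c$ small. This conflicts with wanting $c$ large in the previous regime.

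The main obstacle is exactly this matching: the window near $\alpha=1$ where both the spike and the flat part contribute order-one amounts, on the scale $x\asymp1/n$. I would resolve it by sharpening the second-derivative bound. Near $x=0$ the escort variance of $\log s_{n,i}$ is actually $O(n^2)\cdot\eta(1-\eta)$, but it only matters on $x\lesssim1/n$. So I would instead bound $F_n$ directly for $x\le K/\sqrt n$ via the explicit two-term form. Writing $a=\eta2^{-nHx}$ and $b=(1-\eta)2^{x(\beta\sqrt n+O(1))-n\kappa x^2}$ (up to the factors $\eta^{-x}$ and $(1-\eta)^{-x}$), the claim reduces to $a+b<1$. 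This follows from $\eta(1-2^{-nHx})>(1-\eta)(2^{\beta x\sqrt n+O(x)}-1)$, which holds since $nHx\gg\beta_+x\sqrt n$ uniformly for $x\le K/\sqrt n$. This one inequality then covers both small regimes at once.
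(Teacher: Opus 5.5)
Your proof is correct in substance, but it is organized differently from the paper's.

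\textbf{How the two routes differ.} The paper argues by contradiction. It takes a putative sequence of violating orders $\alpha_n=1-\delta_n$, passes to subsequences, and treats six scales separately: $\delta_n\ge\delta_0$; $t_n=\delta_n\sqrt n\to\infty$; $t_n\to t>0$; $t_n\to0$ with $\lambda_n=n\delta_n\to\infty$; $\lambda_n\to\lambda>0$; and $\lambda_n\to0$. On each scale it computes the exact limit of the logarithmic gap.

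You instead prove direct uniform bounds on three fixed regimes. Your middle regime uses the single estimate $\beta y-\kappa y^2\le\beta_+^2/(4\kappa)$, with $\kappa$ slightly below $(\ln2)V/2$. This handles every order with $n(1-\alpha)\ge c$ at once, which covers the paper's $t_n\to\infty$, $t_n\to t>0$ and $\lambda_n\to\infty$ cases and large $\lambda$. Your two-term inequality handles $n(1-\alpha)\le c$.

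Your route is shorter and effective. The constants are fixed in the order $\delta$ (from $\beta<B_\eta$), then $\theta$, then $c$, then $n$, so an explicit threshold can be read off. The paper's route is more diagnostic. It exhibits the limiting gap on each scale, notably $Q_\beta(t)=\tfrac{(\ln2)V}{2}t^2-\beta t+c_\eta$. The minimum of $Q_\beta$ vanishes exactly at $\beta=B_\eta$, which is the same optimization that appears in the converse.

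\textbf{Two loose ends to tighten.}

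\emph{The crossover is immediate.} Your $\theta$ does not depend on $c$. So choosing $c$ with $\eta^{1-\delta}2^{-cH}<1-\theta$ closes $[c/n,\delta]$ without any refined flat bound. Note that the spike ratio is $\eta^{1-x}2^{-ng(1-x)}\le\eta^{1-\delta}2^{-nHx}$, not $\eta 2^{-nHx}$, but this is harmless.

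\emph{Your final uniformity claim is overstated.} After you drop the $-n\kappa x^2$ term, the inequality $\eta(1-2^{-nHx})>(1-\eta)(2^{\beta x\sqrt n+O(x)}-1)$ does not hold uniformly on $x\le K/\sqrt n$ when $\beta>0$ and $\beta K>c_\eta$. At $x=K/\sqrt n$ the left side tends to $\eta$, while the right side tends to $(1-\eta)(2^{\beta K}-1)>\eta$. The comparison $nHx\gg\beta_+x\sqrt n$ describes the left side only while $nx$ stays bounded; beyond that the left side saturates at $\eta$.

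You only need the inequality for $0<x\le c/n$, and there it is valid. By concavity of $z\mapsto1-2^{-z}$ on $[0,cH]$, the left side, including the factor $\eta^{-x}$, is at least a positive constant times $nx$. The right side, including the factor $(1-\eta)^{-x}$, is $O(x\sqrt n)$.
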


\begin{proof}
We prove the statement uniformly in $0<\alpha<1$ by contradiction.  If no
single threshold in $n$ worked for all such orders, there would exist a
sequence $n\to\infty$ and orders $\alpha_n\in(0,1)$ for which the desired
inequality fails.  Along this putative counterexample sequence,
\[
 \sum_i s_{n,i}^{\alpha_n}
 \ge \left(\sum_i p_i^{\alpha_n}\right)^n.
\]
Since both sides are positive, taking logarithms gives the equivalent
condition
\[
 \Delta_n:=n g(\alpha_n)-\log\sum_i s_{n,i}^{\alpha_n}\le0.
\]
We shall show that every possible asymptotic scale of $\alpha_n$ instead
forces $\Delta_n>0$ for all sufficiently large $n$.

Put
\[
 \delta_n=1-\alpha_n\in(0,1).
\]
The power sum of the constructed output vector in
Eq.~\eqref{supp:eq:sn} is
\[
 \sum_i s_{n,i}^{1-\delta_n}
 =\eta^{1-\delta_n}
 +(1-\eta)^{1-\delta_n}(N_n-1)^{\delta_n}.
\]

\emph{Orders bounded away from one.}
Suppose first that $\alpha_n$ is bounded away from one along a subsequence.
Equivalently, there is a fixed $\delta_0>0$ such that $\delta_n\ge\delta_0$.  Because
$p$ is nonuniform, $H_\alpha(p)>H$ for every $0\le\alpha<1$.  Continuity on
the compact interval $0\le\alpha\le1-\delta_0$ therefore gives a uniform
constant $g_0>0$ such that
\[
 H_{1-\delta_n}(p)-H\ge g_0.
\]
For the output, $0<\eta<1$ and $0<1-\delta_n<1$ imply
$\eta^{1-\delta_n}\le1$ and $(1-\eta)^{1-\delta_n}\le1$; moreover
$(N_n-1)^{\delta_n}\le N_n^{\delta_n}$.  Hence
\begin{equation}
 \sum_i s_{n,i}^{1-\delta_n}
 \le 1+N_n^{\delta_n}\le2N_n^{\delta_n}.
 \label{supp:eq:low-crude}
\end{equation}
Here the last inequality uses $N_n^{\delta_n}\ge1$.  Since
$g(1-\delta)=\delta H_{1-\delta}(p)$ and
\[
 \log N_n=m_n=nH+\beta\sqrt n+O(1),
\]
Eq.~\eqref{supp:eq:low-crude} gives
\begin{align}
 &n g(1-\delta_n)
 -\log\sum_i s_{n,i}^{1-\delta_n}
 \nonumber\\
 &\quad\ge
 n\delta_n\bigl[H_{1-\delta_n}(p)-H\bigr]
 -\beta\delta_n\sqrt n-O(1)>0
 \label{supp:eq:low-away}
\end{align}
for all sufficiently large $n$.  Indeed, the first term is at least
$n\delta_0g_0$, hence grows linearly in $n$, whereas the second is only
$O(\sqrt n)$.  Equation~\eqref{supp:eq:low-away} therefore gives
$\Delta_n>0$ for all sufficiently large $n$, while the assumed failure of
Eq.~\eqref{supp:eq:low-all} requires $\Delta_n\le0$.  These two conclusions
are incompatible.  Hence, for every fixed $\delta_0>0$, a violating sequence
can contain only finitely many indices with $\delta_n\ge\delta_0$, or
equivalently with $\alpha_n\le1-\delta_0$.  Any putative counterexample
sequence must therefore approach order one:
\[
 \delta_n\to0,
 \qquad\text{equivalently}\qquad
 \alpha_n\to1.
\]

\emph{The scale $t_n=\delta_n\sqrt n$.}
Set
\[
 t_n=\delta_n\sqrt n.
\]
Equation~\eqref{supp:eq:cumulant} with $x=-\delta$ gives
\[
 g(1-\delta)
 =H\delta+\frac{(\ln 2)V}{2}\delta^2+R(\delta),
 \qquad R(\delta)=O(\delta^3).
\]
Here $R(\delta)$ denotes the Taylor remainder.  Since
$R(\delta)/\delta^2\to0$ as $\delta\to0$ and $V>0$, there exists
$\delta_*>0$ such that, for $0<\delta<\delta_*$,
\[
 R(\delta)\ge-\frac{(\ln2)V}{4}\delta^2.
\]
Hence
\begin{equation}
 g(1-\delta)\ge H\delta+\frac{(\ln 2)V}{4}\delta^2.
 \label{supp:eq:strong-local}
\end{equation}
The weaker quadratic coefficient $(\ln 2)V/4$ is chosen only to turn the
Taylor expansion into a uniform lower bound; the quadratic Taylor coefficient
itself is $(\ln 2)V/2$.

If $t_n\to\infty$, Eq.~\eqref{supp:eq:strong-local} gives
\[
 n g(1-\delta_n)
 \ge nH\delta_n+\frac{(\ln 2)V}{4}t_n^2.
\]
On the other hand, Eq.~\eqref{supp:eq:low-crude} together with the
relation $\log N_n=m_n=nH+\beta\sqrt n+O(1)$ gives
\[
 \log\sum_i s_{n,i}^{1-\delta_n}
 \le nH\delta_n+\beta t_n+O(1).
\]
The common first-order term $nH\delta_n$ cancels, and hence
\begin{align}
 &n g(1-\delta_n)
 -\log\sum_i s_{n,i}^{1-\delta_n}
 \nonumber\\
 &\quad\ge
 \frac{(\ln 2)V}{4}t_n^2-\beta t_n-O(1)\longrightarrow+\infty.
 \label{supp:eq:low-t-infty}
\end{align}
The positive quadratic term dominates the linear one for every fixed
$\beta$, again contradicting $\Delta_n\le0$.

It remains to treat bounded $t_n$.  Passing to a subsequence, we may assume
$t_n\to t\ge0$.

\emph{Central scale: $t>0$.}
Suppose first that $t_n\to t>0$.  By definition of $t_n$,
$\delta_n=t_n/\sqrt n$.  Equation~\eqref{supp:eq:cumulant} then gives
\[
 n g(1-\delta_n)
 =Ht_n\sqrt n+\frac{(\ln 2)Vt_n^2}{2}+o(1),
\]
because $n\delta_n^3=t_n^3/\sqrt n\to0$.

For the output, write the power sum as
\[
 \sum_i s_{n,i}^{1-\delta_n}=A_n+B_n,
\]
where
\[
 A_n=\eta^{1-\delta_n},\qquad
 B_n=(1-\eta)^{1-\delta_n}(N_n-1)^{\delta_n}.
\]
Since $\delta_n\to0$, $A_n$ satisfies $A_n\to\eta$, so it stays of order one.  In contrast,
using Eq.~\eqref{supp:eq:logN-asymptotic},
\[
 \delta_n\log(N_n-1)
 =Ht_n\sqrt n+\beta t_n+o(1)\longrightarrow+\infty.
\]
Thus $B_n\to\infty$ and $A_n/B_n\to0$. We further obtain
\[
 \begin{aligned}
 \log(A_n+B_n)
 &=\log B_n+\log\!\left(1+\frac{A_n}{B_n}\right)\\
 &=\log B_n+o(1).
 \end{aligned}
\]
Since $c_\eta=-\log(1-\eta)$,
\[
 \log B_n
 =Ht_n\sqrt n+\beta t_n-c_\eta+o(1).
\]

Subtracting the source and output expansions cancels the entire
$Ht_n\sqrt n$ term and yields
\begin{align}
 &n g(1-\delta_n)
 -\log\sum_i s_{n,i}^{1-\delta_n}
 \nonumber\\
 &\quad\longrightarrow
 Q_\beta(t):=\frac{(\ln 2)Vt^2}{2}-\beta t+c_\eta.
 \label{supp:eq:Qlimit}
\end{align}

For $\beta\le0$, every term is nonnegative, so $Q_\beta(t)>0$.  For $\beta>0$, the quadratic is minimized
at $t_*=\beta/[(\ln 2)V]$, and
\begin{equation}
 \min_{t\ge0}Q_\beta(t)
 =c_\eta-\frac{\beta^2}{2(\ln 2)V}>0
 \label{supp:eq:Qmin}
\end{equation}
by the assumption $\beta<B_\eta$.  Hence the logarithmic difference is
strictly positive for all sufficiently large $n$, contradicting the assumed
failure also on the central $t>0$ scale.

\emph{Scale $t_n\to0$.}
We are left with $t_n\to0$, i.e. $\delta_n=o(n^{-1/2})$.  At this resolution
the variable $t_n$ no longer distinguishes all possibilities, so introduce
\[
 \lambda_n=n\delta_n.
\]

If $\lambda_n\to\infty$, since we are in the regime $t_n=\delta_n\sqrt n\to0$, we have
$n\delta_n^2=t_n^2\to0$.  Also
$n\delta_n^3=t_n^2\delta_n\to0$.  Substituting $x=-\delta_n$ into
Eq.~\eqref{supp:eq:cumulant} therefore gives
\[
 \begin{aligned}
 n g(1-\delta_n)
 &=nH\delta_n+\frac{(\ln 2)V}{2}n\delta_n^2
   +O(n\delta_n^3)\\
 &=\lambda_nH+o(1).
 \end{aligned}
\]
To obtain the corresponding output expansion, write again
\[
 A_n=\eta^{1-\delta_n},\qquad
 B_n=(1-\eta)^{1-\delta_n}(N_n-1)^{\delta_n}.
\]
Using Eq.~\eqref{supp:eq:logN-asymptotic}, we find
\[
 \delta_n\log(N_n-1)
 =\lambda_nH+\beta t_n+O(\delta_n)
 =\lambda_nH+o(1).
\]
Therefore
\[
 \begin{aligned}
 \log B_n
 &=(1-\delta_n)\log(1-\eta)
   +\delta_n\log(N_n-1)\\
 &=\lambda_nH-c_\eta+o(1).
 \end{aligned}
\]
Because $\lambda_n\to\infty$ and $H>0$, this implies $B_n\to\infty$,
whereas $A_n\to\eta$.  Hence $A_n/B_n\to0$, and we obtain
\[
 \begin{aligned}
 \log(A_n+B_n)
 &=\log B_n+\log\!\left(1+\frac{A_n}{B_n}\right)\\
 &=\log B_n+o(1)\\
 &=\lambda_nH-c_\eta+o(1).
 \end{aligned}
\]
Thus
\begin{equation}
 n g(1-\delta_n)
 -\log\sum_i s_{n,i}^{1-\delta_n}
 \longrightarrow c_\eta>0.
 \label{supp:eq:low-lambda-infty}
\end{equation}
This is again incompatible with $\Delta_n\le0$.

Finally suppose $\lambda_n$ is bounded.  Passing to a further subsequence,
let $\lambda_n\to\lambda\ge0$.  If $\lambda>0$, then, recalling that
$\delta_n=\lambda_n/n$ and using
Eq.~\eqref{supp:eq:cumulant} gives
\[
 \begin{aligned}
 n g(1-\delta_n)
 &=H\lambda_n
   +\frac{(\ln2)V}{2}\frac{\lambda_n^2}{n}
   +O\!\left(\frac{\lambda_n^3}{n^2}\right)\\
 &\longrightarrow\lambda H,
 \end{aligned}
\]
because $\lambda_n$ is bounded. Moreover, multiplying
Eq.~\eqref{supp:eq:logN-asymptotic} by
$\delta_n=\lambda_n/n$ gives
\[
 \begin{aligned}
 \delta_n\log(N_n-1)
 &=\frac{\lambda_n}{n}
   \left(nH+\beta\sqrt n+O(1)\right)\\
 &=\lambda_nH+\frac{\beta\lambda_n}{\sqrt n}
   +O\!\left(\frac{\lambda_n}{n}\right)
 \longrightarrow\lambda H.
 \end{aligned}
\]
Thus $(N_n-1)^{\delta_n}\to2^{\lambda H}$, while
$(1-\eta)^{1-\delta_n}\to1-\eta$, and therefore
\[
 (1-\eta)^{1-\delta_n}(N_n-1)^{\delta_n}
 \longrightarrow(1-\eta)2^{\lambda H}.
\]
Consequently,
\begin{align}
 n g(1-\delta_n)&\longrightarrow\lambda H,\\
 \sum_i s_{n,i}^{1-\delta_n}
 &\longrightarrow \eta+(1-\eta)2^{\lambda H}.
 \label{supp:eq:low-lambda-limit}
\end{align}
The logarithmic difference therefore tends to
\begin{equation}
 \lambda H-\log\left[\eta+(1-\eta)2^{\lambda H}\right]>0.
 \label{supp:eq:low-lambda-positive}
\end{equation}
To see this, note that  $\lambda H>0$,
$2^{\lambda H}>1$, and therefore
\[
 \eta+(1-\eta)2^{\lambda H}<2^{\lambda H};
\]
taking logarithms gives Eq.~\eqref{supp:eq:low-lambda-positive}.

It remains to consider the case $\lambda_n\to0$. Since
$\delta_n=\lambda_n/n$, Eq.~\eqref{supp:eq:cumulant} gives
\[
 \begin{aligned}
 n g(1-\delta_n)
 &=H\lambda_n
   +\frac{(\ln2)V}{2}\frac{\lambda_n^2}{n}
   +O\!\left(\frac{\lambda_n^3}{n^2}\right)\\
 &=H\lambda_n+o(\lambda_n),
 \end{aligned}
\]
because $(\lambda_n^2/n)/\lambda_n=\lambda_n/n\to0$ and likewise for
the cubic term.

For the output, note first that $\delta_n=\lambda_n/n=o(\lambda_n)$.
For any fixed $a\in(0,1)$, use the Taylor expansion
$2^x=e^{(\ln2)x}=1+(\ln2)x+O(x^2)$ as $x\to0$.  With
$x=-\delta_n\log a$, this gives
\[
 a^{1-\delta_n}
 =a\,2^{-\delta_n\log a}
 =a\left[1-(\ln2)(\log a)\delta_n+O(\delta_n^2)\right]
 =a+o(\lambda_n).
\]
Here the last equality uses $\delta_n=o(\lambda_n)$.
Applying this result gives
\[
 \eta^{1-\delta_n}=\eta+o(\lambda_n),\qquad
 (1-\eta)^{1-\delta_n}=(1-\eta)+o(\lambda_n).
\]
Moreover, multiplying Eq.~\eqref{supp:eq:logN-asymptotic} by
$\delta_n=\lambda_n/n$ gives
\[
 \begin{aligned}
 \delta_n\log(N_n-1)
 &=H\lambda_n+\frac{\beta\lambda_n}{\sqrt n}
   +O\!\left(\frac{\lambda_n}{n}\right)\\
 &=H\lambda_n+o(\lambda_n).
 \end{aligned}
\]
If $x_n=H\lambda_n+o(\lambda_n)$, then $x_n\to0$ and
$2^{x_n}=1+(\ln2)x_n+o(x_n)$.  Therefore
\[
 (N_n-1)^{\delta_n}
 =2^{H\lambda_n+o(\lambda_n)}
 =1+(\ln2)H\lambda_n+o(\lambda_n).
\]
By similar arguments we find
\[
 \begin{aligned}
 &(1-\eta)^{1-\delta_n}(N_n-1)^{\delta_n}\\
 &\quad=\bigl[(1-\eta)+o(\lambda_n)\bigr]
 \bigl[1+(\ln2)H\lambda_n+o(\lambda_n)\bigr]\\
 &\quad=(1-\eta)+(1-\eta)(\ln2)H\lambda_n
 +o(\lambda_n).
 \end{aligned}
\]
All cross terms are absorbed into $o(\lambda_n)$.  Indeed, if
$u_n=o(\lambda_n)$ and $v_n=o(\lambda_n)$ denote the two remainder terms,
then constant multiples of $u_n$ and $v_n$ are $o(\lambda_n)$, while
$\lambda_n u_n$, $\lambda_n v_n$, and $u_nv_n$ are also $o(\lambda_n)$
because $\lambda_n\to0$.

Adding the contribution $\eta+o(\lambda_n)$ yields
\[
 \sum_i s_{n,i}^{1-\delta_n}
 =1+(1-\eta)(\ln2)H\lambda_n+o(\lambda_n).
\]
Finally, using $\log_2(1+x)=x/\ln2+o(x)$,
\begin{equation}
 \log\sum_i s_{n,i}^{1-\delta_n}
 =(1-\eta)H\lambda_n+o(\lambda_n).
 \label{supp:eq:low-lambda-zero}
\end{equation}
Subtracting Eq.~\eqref{supp:eq:low-lambda-zero} from the source expansion gives
\[
 \Delta_n
 =\eta H\lambda_n+o(\lambda_n).
\]
Write the remainder as $r_n=o(\lambda_n)$.  Since $\eta H>0$ and
$r_n/\lambda_n\to0$, for all sufficiently large $n$,
$|r_n|\le(\eta H/2)\lambda_n$.  Because $\lambda_n>0$, it follows that
\[
 \Delta_n\ge\frac{\eta H}{2}\lambda_n>0.
\]
This final possibility also contradicts the assumed failure.

We have exhausted every possible asymptotic scale of a putative
counterexample sequence. Every case forces $\Delta_n>0$, contrary to the defining
counterexample condition $\Delta_n\le0$.  Therefore no such sequence exists,
and Eq.~\eqref{supp:eq:low-all} holds uniformly for all $0<\alpha<1$ once
$n$ is sufficiently large.
\end{proof}

\begin{lemma}[Orders above one]
\label{supp:lem:high}
For every fixed $0<\eta<1$ and every fixed real $\beta$, there exists an
integer $N$ such that for every $n\ge N$ and every $\alpha>1$,
\begin{equation}
 \sum_i s_{n,i}^\alpha
 >\left(\sum_i p_i^\alpha\right)^n.
 \label{supp:eq:high-all}
\end{equation}
In particular, the threshold $N$ is uniform over the whole interval
$\alpha>1$.
\end{lemma}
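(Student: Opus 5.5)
We argue by contradiction, mirroring the proof of Lemma~\ref{supp:lem:low}. Suppose no uniform threshold exists. Then there is a sequence $n\to\infty$ and orders $\alpha_n>1$ with
\[
 \Gamma_n:=\log\sum_i s_{n,i}^{\alpha_n}-n g(\alpha_n)\le0 .
\]
Write $\alpha_n=1+\delta_n$ with $\delta_n>0$. The output power sum is
\[
 \sum_i s_{n,i}^{1+\delta_n}=\eta^{1+\delta_n}+(1-\eta)^{1+\delta_n}(N_n-1)^{-\delta_n}.
\]
The key simplification is that the spike alone already gives the crude lower bound $\log\sum_i s_{n,i}^{\alpha}\ge\alpha\log\eta$, which does not depend on $n$. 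Meanwhile $n g(\alpha)=-n(\alpha-1)H_\alpha(p)$ is very negative unless $\alpha$ is close to one. We then split according to the scale of $\delta_n$.

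\emph{Orders bounded away from one.} If $\delta_n\ge\delta_0>0$, note that $H_\alpha(p)$ is decreasing and tends to the min-entropy $H_\infty(p)=-\log\max_ip_i>0$. Hence $g(\alpha)\le-(\alpha-1)H_\infty(p)$ for every $\alpha>1$. This gives
\[
 \Gamma_n\ge\alpha_n\log\eta+n(\alpha_n-1)H_\infty(p)=\log\eta+(\alpha_n-1)\bigl[nH_\infty(p)+\log\eta\bigr].
\]
Since $\alpha_n-1\ge\delta_0$, the right-hand side is at least $\log\eta+\delta_0[nH_\infty+\log\eta]>0$ for large $n$, uniformly even as $\alpha_n\to\infty$. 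This handles the unbounded-order regime, which is the only genuinely new feature compared with Lemma~\ref{supp:lem:low}.

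\emph{Orders tending to one.} Now let $\delta_n\to0$ and set $\lambda_n=n\delta_n$.

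If $\lambda_n\to\infty$, use $g(1+\delta)\le-H\delta+C\delta^2$ for small $\delta$, together with the uniform bound $g(1+\delta)\le-\delta H_\infty$. Taking the better of the two, $n g(\alpha_n)\to-\infty$ with $-n g(\alpha_n)\ge c\lambda_n$ for some $c>0$. Combined with $\log\sum s^{\alpha_n}\ge\alpha_n\log\eta\to\log\eta$, this gives $\Gamma_n\to+\infty$.

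If $\lambda_n\to\lambda>0$, then $n g(\alpha_n)\to-\lambda H$. On the output side, $(N_n-1)^{-\delta_n}\to2^{-\lambda H}$ because $\delta_n\log N_n=\lambda_n H+O(\lambda_n/\sqrt n)$. The output sum therefore tends to $\eta+(1-\eta)2^{-\lambda H}>2^{-\lambda H}$, so $\Gamma_n$ has a strictly positive limit.

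If $\lambda_n\to0$, expand to first order in $\lambda_n$ exactly as in the last case of Lemma~\ref{supp:lem:low}. This gives $n g(\alpha_n)=-H\lambda_n+o(\lambda_n)$ and $\sum s^{\alpha_n}=1-(1-\eta)(\ln2)H\lambda_n+o(\lambda_n)$. Hence $\Gamma_n=\eta H\lambda_n+o(\lambda_n)>0$ eventually.

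Every case contradicts $\Gamma_n\le0$, so a uniform threshold exists. The main obstacle I anticipate is making the $\lambda_n\to\infty$ case uniform: $\delta_n$ may be small but not tiny, and the Taylor remainder must be controlled. I would handle it with the single inequality $g(1+\delta)\le-\delta\,\min\{H_{1+\delta_*}(p),H\}$, valid by monotonicity of Rényi entropy for $\delta\le\delta_*$; this avoids needing any second-order information. Notably no condition on $\beta$ enters, consistent with the statement.
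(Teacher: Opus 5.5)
Your proposal is correct and follows essentially the same route as the paper: a proof by contradiction with the same case split ($\delta_n$ bounded away from zero, then $\lambda_n=n\delta_n\to\infty$, $\lambda_n\to\lambda>0$, $\lambda_n\to0$), using the spike $\eta$ as an $n$-independent lower bound on the output power sum, and the same limits and first-order expansions in the last two cases. The only differences are cosmetic. In the bounded-away case the paper uses monotonicity of $\ell_r$-norms with $q=\|p\|_{1+\delta_0}<1$, whereas you use $\sum_i p_i^\alpha\le(\max_i p_i)^{\alpha-1}$; in the $\lambda_n\to\infty$ case the paper uses the Taylor expansion along $\delta_n\to0$, which is already valid there, so the remainder issue you anticipated does not arise, and your global bound sidesteps it anyway.
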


\begin{proof}
We prove the uniform statement by contradiction.  Suppose that no single
threshold $N$ works for all $\alpha>1$.  Then there exist a sequence
$n\to\infty$ and orders $\alpha_n>1$ for which Eq.~\eqref{supp:eq:high-all}
fails, i.e.
\[
 \sum_i s_{n,i}^{\alpha_n}
 \le \left(\sum_i p_i^{\alpha_n}\right)^n.
\]
Put
\[
 \delta_n=\alpha_n-1>0.
\]
We show that every possible asymptotic behavior of $\delta_n$ contradicts
this assumed failure.

First suppose that the violating orders remain a fixed distance above one.
Precisely, suppose that along a subsequence there is a constant
$\delta_0>0$ such that
\[
 \delta_n\ge\delta_0,
 \qquad\text{equivalently}\qquad
 \alpha_n\ge1+\delta_0.
\]
Let $q=\|p\|_{1+\delta_0}<1$.  Here
$\|x\|_r=(\sum_i x_i^r)^{1/r}$ denotes the $\ell_r$ norm.  For a
fixed nonnegative finite vector these norms are nonincreasing in $r$
\cite{HardyLittlewoodPolya1952}, hence
\[
 \|p\|_{\alpha_n}\le\|p\|_{1+\delta_0}=q.
\]
Moreover $q<1$: because $p$ is a nontrivial probability vector with full
support, for every $r>1$ one has $p_i^r<p_i$ for every component with
$0<p_i<1$, and therefore $\sum_i p_i^r<\sum_i p_i=1$.
Tensor products factorize the $\ell_r$ norm, while the $\ell_r$ norm of a
vector is at least as large as any one of its components.  Since $s_n$
contains the fixed component $\eta$, we obtain
\begin{equation}
 \left(\sum_i p_i^{\alpha_n}\right)^{n/\alpha_n}
 =\|p\|_{\alpha_n}^n\le q^n,
 \qquad
 \left(\sum_i s_{n,i}^{\alpha_n}\right)^{1/\alpha_n}\ge\eta.
 \label{supp:eq:high-away}
\end{equation}
Because $0<q<1$, $q^n\to0$, so $q^n<\eta$ for all sufficiently large $n$.
Equation~\eqref{supp:eq:high-away} then gives
\[
 \left(\sum_i p_i^{\alpha_n}\right)^{n/\alpha_n}
 <
 \left(\sum_i s_{n,i}^{\alpha_n}\right)^{1/\alpha_n}.
\]
Raising both sides to the positive power $\alpha_n$ yields
\[
 \left(\sum_i p_i^{\alpha_n}\right)^n
 <\sum_i s_{n,i}^{\alpha_n},
\]
which is the desired high-order inequality and contradicts the assumed
failure.  Thus, every putative counterexample sequence must satisfy
\[
 \delta_n=\alpha_n-1\longrightarrow0.
\]

We may therefore assume $\delta_n\to0$ and put
\[
 \lambda_n=n\delta_n.
\]
Consider first the case $\lambda_n\to\infty$.  Substituting
$x=\delta_n$ into the expansion \eqref{supp:eq:cumulant} and
multiplying by $n$ gives
\[
 n g(1+\delta_n)
 =-nH\delta_n+\frac{(\ln2)V}{2}n\delta_n^2
   +O(n\delta_n^3).
\]
Since $n\delta_n=\lambda_n$ and $\delta_n\to0$,
\[
 n\delta_n^2=\lambda_n\delta_n=o(\lambda_n),
 \qquad
 n\delta_n^3=\lambda_n\delta_n^2=o(\lambda_n).
\]
Hence
\begin{equation}
 n g(1+\delta_n)=-\lambda_nH+o(\lambda_n)
 \longrightarrow-\infty,
 \label{supp:eq:high-lambda-infty}
\end{equation}
because $H>0$ and $\lambda_n\to\infty$.  The source power sum is therefore
\[
 \left(\sum_i p_i^{\alpha_n}\right)^n
 =2^{\,n g(1+\delta_n)}\longrightarrow0.
\]
On the other hand, the output power sum fulfills
\[
 \sum_i s_{n,i}^{\alpha_n}
 \ge \eta^{1+\delta_n}\longrightarrow\eta>0.
\]
Consequently, for all sufficiently large $n$,
\[
 \left(\sum_i p_i^{\alpha_n}\right)^n
 <\eta^{1+\delta_n}
 \le\sum_i s_{n,i}^{\alpha_n},
\]
again contradicting the assumed failure.

It remains to consider bounded $\lambda_n$.  Passing to a subsequence, we may
assume
\[
 \lambda_n\longrightarrow\lambda\ge0.
\]
Because $\delta_n=\lambda_n/n$, the expansion in Eq.~\eqref{supp:eq:cumulant} gives
\[
 n g(1+\delta_n)
 =-H\lambda_n+O(\lambda_n^2/n),
\]
so if $\lambda>0$,
\[
 \left(\sum_i p_i^{1+\delta_n}\right)^n
 =2^{\,n g(1+\delta_n)}
 \longrightarrow2^{-\lambda H}.
\]
For the output, use its exact power sum
\begin{equation}
 \sum_i s_{n,i}^{1+\delta_n}
 =\eta^{1+\delta_n}
 +(1-\eta)^{1+\delta_n}(N_n-1)^{-\delta_n}.
 \label{supp:eq:high-output-exact}
\end{equation}
Since $\delta_n\to0$,
\[
 \eta^{1+\delta_n}\longrightarrow\eta,
 \qquad
 (1-\eta)^{1+\delta_n}\longrightarrow1-\eta.
\]
Furthermore,
\[
 \delta_n\log(N_n-1)
 =\frac{\lambda_n}{n}\bigl(nH+\beta\sqrt n+O(1)\bigr)
 \longrightarrow\lambda H,
\]
so
\[
 (N_n-1)^{-\delta_n}\longrightarrow2^{-\lambda H}.
\]
Thus
\begin{align}
 \left(\sum_i p_i^{1+\delta_n}\right)^n
 &\longrightarrow 2^{-\lambda H},\\
 \sum_i s_{n,i}^{1+\delta_n}
 &\longrightarrow \eta+(1-\eta)2^{-\lambda H}>2^{-\lambda H}.
 \label{supp:eq:high-lambda-limit}
\end{align}
The strict inequality follows from
\[
 \eta+(1-\eta)2^{-\lambda H}-2^{-\lambda H}
 =\eta\bigl(1-2^{-\lambda H}\bigr)>0,
\]
because $\eta>0$, $H>0$, and $\lambda>0$.  Hence the output power sum is
eventually strictly larger than the source power sum, contradicting the
assumed failure.

Finally suppose $\lambda_n\to0$.  Since
$\delta_n=\lambda_n/n$, substituting $x=\delta_n$ into
Eq.~\eqref{supp:eq:cumulant} and multiplying by $n$ gives
\[
 \begin{aligned}
 n g(1+\delta_n)
 &=-H\lambda_n
   +\frac{(\ln2)V}{2}\frac{\lambda_n^2}{n}
   +O\!\left(\frac{\lambda_n^3}{n^2}\right)\\
 &=-H\lambda_n+o(\lambda_n),
 \end{aligned}
\]
because $(\lambda_n^2/n)/\lambda_n=\lambda_n/n\to0$, and the cubic
term is smaller still.

For the output, start from the exact identity
Eq.~\eqref{supp:eq:high-output-exact}.  Since
$\delta_n=\lambda_n/n=o(\lambda_n)$, for every fixed $a\in(0,1)$ the
Taylor expansion $2^x=1+(\ln2)x+O(x^2)$ gives
\[
 a^{1+\delta_n}
 =a\,2^{\delta_n\log a}
 =a\left[1+(\ln2)(\log a)\delta_n+O(\delta_n^2)\right]
 =a+o(\lambda_n).
\]
Thus
\[
 \eta^{1+\delta_n}=\eta+o(\lambda_n),\qquad
 (1-\eta)^{1+\delta_n}=(1-\eta)+o(\lambda_n).
\]
Moreover, multiplying Eq.~\eqref{supp:eq:logN-asymptotic} by
$\delta_n=\lambda_n/n$ gives
\[
 \begin{aligned}
 \delta_n\log(N_n-1)
 &=H\lambda_n+\frac{\beta\lambda_n}{\sqrt n}
   +O\!\left(\frac{\lambda_n}{n}\right)\\
 &=H\lambda_n+o(\lambda_n).
 \end{aligned}
\]
Using again $2^x=1+(\ln2)x+o(x)$ at $x=0$ therefore yields
\[
 (N_n-1)^{-\delta_n}
 =2^{-H\lambda_n+o(\lambda_n)}
 =1-(\ln2)H\lambda_n+o(\lambda_n).
\]
Collecting these results we obtain
\[
 \begin{aligned}
 &(1-\eta)^{1+\delta_n}(N_n-1)^{-\delta_n}\\
 &\quad=\bigl[(1-\eta)+o(\lambda_n)\bigr]
 \bigl[1-(\ln2)H\lambda_n+o(\lambda_n)\bigr]\\
 &\quad=(1-\eta)-(1-\eta)(\ln2)H\lambda_n+o(\lambda_n).
 \end{aligned}
\]
For the output power sum this implies, together with Eq. (\ref{supp:eq:high-output-exact}):
\[
 \sum_i s_{n,i}^{1+\delta_n}
 =1-(1-\eta)(\ln2)H\lambda_n+o(\lambda_n).
\]
Using $\log_2(1+x)=x/\ln2+o(x)$ as $x\to0$ gives
\begin{equation}
 \log\sum_i s_{n,i}^{1+\delta_n}
 =-(1-\eta)\lambda_nH+o(\lambda_n).
 \label{supp:eq:high-lambda-zero}
\end{equation}
Subtracting the source expansion from
Eq.~\eqref{supp:eq:high-lambda-zero} yields
\[
 \log\sum_i s_{n,i}^{1+\delta_n}
 -n g(1+\delta_n)
 =\eta H\lambda_n+o(\lambda_n).
\]
This expression is positive for all sufficiently large $n$. Thus the output power sum is again strictly larger than the source power
sum, contradicting the assumed failure.

We have exhausted every possible asymptotic scale of a putative violating
sequence.  Therefore no such sequence exists, and a single threshold $N$
works simultaneously for every $\alpha>1$.
\end{proof}

\section{Catalytic achievability}

We now combine the preceding results to establish catalytic achievability.

\begin{proposition}
\label{supp:prop:trumping}
If $\beta<B_\eta$, then
\begin{equation}
 p^{\otimes n}\prec_T s_n
 \label{supp:eq:trumping}
\end{equation}
for every sufficiently large $n$.
\end{proposition}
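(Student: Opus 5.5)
The proposition should follow by assembling the ingredients already established, with Theorem~\ref{supp:thm:KT} as the final step. We work with $x=p^{\otimes n}$ and $y=s_n$ padded with zeros to dimension $D^n$. The goal is to exhibit a single threshold $n_0$ such that, for every $n\ge n_0$, the hypotheses of the Klimesh--Turgut theorem hold and every inequality $f_\alpha(x)<f_\alpha(y)$, $\alpha\in\mathbb R$, is satisfied.

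We proceed in the following order.
\begin{enumerate}
 \item By Lemma~\ref{supp:lem:entropy-support}, there is $n_1$ such that for $n\ge n_1$ we have $N_n<D^n$, so the padded $s_n$ has a zero component. Since $p^{\otimes n}$ has full support, the two vectors cannot be permutations of one another, and they do not both contain zeros. This verifies the hypotheses of Theorem~\ref{supp:thm:KT}. As explained in the subsection ``Special case used below,'' the same fact gives $f_\alpha(x)<\infty=f_\alpha(y)$ for all $\alpha\le0$.
 \item Lemma~\ref{supp:lem:entropy-support} also gives $n_2$ with $H(s_n)<nH$ for $n\ge n_2$. This is the $\alpha=1$ condition, Eq.~\eqref{supp:eq:KT-special-one}, once we note that $H(p^{\otimes n})=nH$ by additivity.
 \item For $0<\alpha<1$, Lemma~\ref{supp:lem:low} applies because $\beta<B_\eta$. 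It gives one $n_3$ such that Eq.~\eqref{supp:eq:KT-special-low} holds simultaneously for all such $\alpha$. Here we use $\sum_i (p^{\otimes n})_i^\alpha=(\sum_i p_i^\alpha)^n$, and note that $f_\alpha=-\log\sum(\cdot)^\alpha$ reverses the inequality into the required $f_\alpha(x)<f_\alpha(y)$.
 \item For $\alpha>1$, Lemma~\ref{supp:lem:high} gives an $n_4$, uniform in $\alpha$, such that Eq.~\eqref{supp:eq:KT-special-high} holds. With $f_\alpha=\log\sum(\cdot)^\alpha$, this is again $f_\alpha(x)<f_\alpha(y)$.
\end{enumerate}

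Set $n_0=\max\{n_1,n_2,n_3,n_4\}$, and also large enough that the first component $\eta$ is the largest entry of $s_n$; that ordering is not actually needed for the Klimesh functions. For $n\ge n_0$ all conditions in Eq.~\eqref{supp:eq:KT-all} hold, so Theorem~\ref{supp:thm:KT} yields a finite $r_n$ with $p^{\otimes n}\otimes r_n\prec s_n\otimes r_n$, that is, $p^{\otimes n}\prec_T s_n$.

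The genuine difficulty, namely the uniformity in $\alpha$ near $\alpha=1$ and the role of $\beta<B_\eta$, has already been absorbed into Lemmas~\ref{supp:lem:low} and~\ref{supp:lem:high}. The only point needing care is therefore that each lemma supplies a threshold independent of $\alpha$. This is what allows taking a single finite maximum over finitely many thresholds; with thresholds depending on $\alpha$, a supremum over $\alpha$ could be infinite. It is also worth checking that padding by zeros does not change any positive-order power sum or the entropy, so the lemmas, stated for the unpadded $s_n$, apply verbatim to the padded vector.
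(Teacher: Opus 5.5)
Your proposal is correct and follows the paper's own proof: it uses the support and entropy statements of Lemma~\ref{supp:lem:entropy-support}, the power-sum inequalities of Lemmas~\ref{supp:lem:low} and~\ref{supp:lem:high}, which hold uniformly in $\alpha$, and the nonpositive-order conditions that hold automatically because the output is zero-padded, and then concludes with Theorem~\ref{supp:thm:KT}. Your explicit remarks that the thresholds do not depend on $\alpha$ (so a finite maximum suffices) and that zero-padding leaves positive-order power sums and the entropy unchanged make precise points the paper leaves implicit.
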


\begin{proof}
Lemma~\ref{supp:lem:low} proves all $0<\alpha<1$ inequalities,
Lemma~\ref{supp:lem:entropy-support} gives strict entropy decrease, and
Lemma~\ref{supp:lem:high} proves all $\alpha>1$ inequalities.  The padded
$s_n$ has a zero component while $p^{\otimes n}$ has full support, so all
nonpositive-order conditions are automatic.  The Klimesh--Turgut theorem,
Theorem~\ref{supp:thm:KT}, therefore gives Eq.~\eqref{supp:eq:trumping} with
a finite catalyst.
\end{proof}

\begin{proposition}[Catalytic lower bound]
\label{supp:prop:achievability}
For every $0<\eps<1$,
\begin{equation}
 \liminf_{n\to\infty}
 \frac{M_n^{\cat}(\eps)-nH}{\sqrt n}
 \ge \sqrt{2(\ln 2)V\log[1/(1-\eps^2)]}.
 \label{supp:eq:achievability}
\end{equation}
\end{proposition}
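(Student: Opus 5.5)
The plan is to obtain the liminf bound by exhibiting, for each admissible $\beta$, a feasible pair $(r_n,s_n)$ with $m=m_n$ in the definition~\eqref{supp:eq:conversion}. That gives $M_n^{\cat}(\eps)\ge m_n$ for large $n$; we then let $\beta\uparrow B_\eta$.

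\emph{Choice of parameters.} Fix $0<\eps<1$ and set $\eta=\eps^2\in(0,1)$. Then $B_\eta=\sqrt{2(\ln2)V\log[1/(1-\eps^2)]}$ is exactly the right-hand side of Eq.~\eqref{supp:eq:achievability}, and it is strictly positive because $V>0$ for nonuniform $p$. Fix any $\beta$ with $0<\beta<B_\eta$. Define $m_n=\lfloor nH+\beta\sqrt n\rfloor$, $N_n=2^{m_n}$, and let $s_n$ be given by Eq.~\eqref{supp:eq:sn}.

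\emph{Feasibility.} By Proposition~\ref{supp:prop:trumping} there is an integer $n_1$ such that for every $n\ge n_1$ a finite catalyst $r_n$ exists with $p^{\otimes n}\otimes r_n\prec s_n\otimes r_n$. By Lemma~\ref{supp:lem:fidelity}, applied with $N=N_n\to\infty$, there is $n_2$ such that $\Fid(s_n,u_{N_n})>1-\eta=1-\eps^2$ for $n\ge n_2$. Since $u_{N_n}=u_2^{\otimes m_n}$, this gives $\TD(s_n,u_2^{\otimes m_n})=\sqrt{1-\Fid}<\eps$. Both conditions in Eq.~\eqref{supp:eq:conversion} therefore hold with $m=m_n$, so $M_n^{\cat}(\eps)\ge m_n$ for $n\ge\max(n_1,n_2)$.

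\emph{Passage to the limit.} From $m_n\ge nH+\beta\sqrt n-1$ we get $(M_n^{\cat}(\eps)-nH)/\sqrt n\ge\beta-1/\sqrt n$. Hence the liminf is at least $\beta$. Since $\beta<B_\eta$ was arbitrary, taking the supremum over $\beta$ yields Eq.~\eqref{supp:eq:achievability}.

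There is no real obstacle here, since the analytic work is contained in Lemmas~\ref{supp:lem:low} and~\ref{supp:lem:high}. The only points to check are the following. The thresholds in $n$ depend on $\beta$, which is harmless because $\beta$ is fixed before $n\to\infty$. The identification $\eta=\eps^2$ makes $c_\eta=\log[1/(1-\eps^2)]$. The strict fidelity inequality gives the non-strict distance constraint required in Eq.~\eqref{supp:eq:conversion}.
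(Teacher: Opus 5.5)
Your proposal is correct and follows the paper's proof essentially verbatim: you set $\eta=\eps^2$, invoke Proposition~\ref{supp:prop:trumping} and Lemma~\ref{supp:lem:fidelity} to obtain $M_n^{\cat}(\eps)\ge m_n$ for all large $n$, and then let $\beta\uparrow B_\eta$. You also make explicit two small details the paper leaves implicit, namely the floor bound $m_n\ge nH+\beta\sqrt n-1$ and the passage from the strict fidelity inequality to the non-strict distance constraint.
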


\begin{proof}
Set $\eta=\eps^2$ and fix any
\begin{equation}
 0<\beta<B_\eta = \sqrt{2(\ln 2)V[-\log(1-\eta)]}.
\end{equation}
By Proposition~\ref{supp:prop:trumping}, a finite exact-return catalyst exists
for $p^{\otimes n}\to s_n$ for every sufficiently large $n$.  By
Lemma~\ref{supp:lem:fidelity},
\begin{equation}
 \Fid(s_n,u_2^{\otimes m_n})>1-\eta=1-\eps^2
\end{equation}
eventually.  Hence $M_n^{\cat}(\eps)\ge m_n$ and therefore
\[
 \liminf_{n\to\infty}
 \frac{M_n^{\cat}(\eps)-nH}{\sqrt n}\ge\beta.
\]
Letting $\beta$ increase to the displayed endpoint proves
Eq.~\eqref{supp:eq:achievability}.
\end{proof}

\section{Catalytic upper bound}

We next prove that no exact-return catalyst can produce a larger
second-order coefficient.  The converse applies to an arbitrary feasible
catalytic protocol, without assuming a specific form of the achieving
output. The following lemma converts
the target-fidelity constraint into the complementary lower bound on the
output power sum.

\begin{lemma}
\label{supp:lem:Jensen}
Let $s$ be any probability vector and let
$u_N=(1/N,\ldots,1/N)$ be uniform on $N$ coordinates.  If
\begin{equation}
 \Fid(s,u_N)\ge1-\eps^2,
 \label{supp:eq:fid-assume}
\end{equation}
then for every $\alpha\ge1/2$,
\begin{equation}
 \sum_i s_i^\alpha
 \ge N^{1-\alpha}(1-\eps^2)^\alpha.
 \label{supp:eq:Jensen-bound}
\end{equation}
\end{lemma}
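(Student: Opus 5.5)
The plan is to restrict attention to the $N$ coordinates on which $u_N$ is supported and apply Jensen's inequality to the square roots $\sqrt{s_i}$. The key observation is that for $\alpha\ge1/2$ one can write $s_i^\alpha=(\sqrt{s_i})^{2\alpha}$, and the exponent $2\alpha\ge1$ makes $x\mapsto x^{2\alpha}$ convex on $[0,\infty)$.

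First, I will unpack the hypothesis. With zeros appended as in Eq.~\eqref{supp:eq:distance}, the squared fidelity involves only the coordinates $i=1,\ldots,N$ where $u_N$ is nonzero:
\[
 \Fid(s,u_N)=\frac{1}{N}\Bigl(\sum_{i=1}^N\sqrt{s_i}\Bigr)^2 .
\]
Hence Eq.~\eqref{supp:eq:fid-assume} is equivalent to
\[
 S:=\sum_{i=1}^N\sqrt{s_i}\ \ge\ \sqrt{N(1-\eps^2)} .
\]
If $s$ has more than $N$ coordinates, the remaining ones contribute nonnegative terms $s_i^\alpha$. I will therefore discard them and only lower-bound $\sum_{i=1}^N s_i^\alpha$.

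Next, I apply Jensen's inequality to the convex function $x\mapsto x^{2\alpha}$ with the uniform weights $1/N$ and the points $\sqrt{s_1},\ldots,\sqrt{s_N}$:
\[
 \frac1N\sum_{i=1}^N s_i^\alpha
 =\frac1N\sum_{i=1}^N(\sqrt{s_i})^{2\alpha}
 \ge\Bigl(\frac{S}{N}\Bigr)^{2\alpha}
 \ge\Bigl(\frac{1-\eps^2}{N}\Bigr)^{\alpha}.
\]
The last step uses that $x\mapsto x^{2\alpha}$ is increasing on $[0,\infty)$ together with $S/N\ge\sqrt{(1-\eps^2)/N}$. Multiplying by $N$ gives $\sum_{i\le N}s_i^\alpha\ge N^{1-\alpha}(1-\eps^2)^\alpha$, which is Eq.~\eqref{supp:eq:Jensen-bound}.

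There is no serious obstacle. The only points to check are these:
\begin{itemize}
\item Convexity genuinely requires $2\alpha\ge1$, which is exactly the stated range $\alpha\ge1/2$.
\item Zero entries of $s$ cause no trouble, since $0^{2\alpha}=0$ for $\alpha>0$.
\item The fidelity formula is taken in the aligned ordering used in Eq.~\eqref{supp:eq:distance}. Choosing the $N$ coordinates differently could only change which $s_i$ enter $S$, and the argument works for any fixed choice of $N$ coordinates.
\end{itemize}
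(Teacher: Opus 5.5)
Your proposal is correct and is essentially the paper's proof. The paper likewise restricts to the $N$ target coordinates and applies Jensen's inequality for the convex map $z\mapsto z^{2\alpha}$. It uses the rescaled variables $x_i=\sqrt{N s_i}$ rather than your $\sqrt{s_i}$, which changes only the bookkeeping of the factors of $N$.
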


\begin{proof}
Let $I$ be the $N$ coordinates on which $u_N$ is supported and put
$x_i=\sqrt{Ns_i}$ for $i\in I$.  The fidelity assumption is equivalent to
\begin{equation}
 \frac1N\sum_{i\in I}x_i\ge\sqrt{1-\eps^2}.
\end{equation}
Since $2\alpha\ge1$, the function $z\mapsto z^{2\alpha}$ is convex on
$[0,\infty)$.  Moreover, from $x_i=\sqrt{Ns_i}$ we have
$s_i^\alpha=N^{-\alpha}x_i^{2\alpha}$ for $i\in I$.  Therefore
\begin{equation}
\begin{aligned}
 \sum_i s_i^\alpha
 &\ge \sum_{i\in I}s_i^\alpha
 =N^{-\alpha}\sum_{i\in I}x_i^{2\alpha}\\
 &=N^{1-\alpha}
 \left(\frac1N\sum_{i\in I}x_i^{2\alpha}\right)\\
 &\ge N^{1-\alpha}
 \left(\frac1N\sum_{i\in I}x_i\right)^{2\alpha}\\
 &\ge N^{1-\alpha}(1-\eps^2)^\alpha.
\end{aligned}
\end{equation}
The first inequality restricts the sum to the $N$ target-support coordinates.
The second follows from Jensen's inequality for the convex function
$z^{2\alpha}$, and the last uses the fidelity assumption above.
\end{proof}

\begin{proposition}[Catalytic upper bound]
\label{supp:prop:converse}
For every $0<\eps<1$,
\begin{equation}
 \limsup_{n\to\infty}
 \frac{M_n^{\cat}(\eps)-nH}{\sqrt n}
 \le \sqrt{2(\ln 2)V\log[1/(1-\eps^2)]}.
 \label{supp:eq:converse}
\end{equation}
\end{proposition}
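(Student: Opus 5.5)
The plan follows the converse sketch in the main text, made rigorous for an arbitrary feasible sequence. Fix $0<\eps<1$ and, for each $n$, let $m_n=M_n^{\cat}(\eps)$. Choose a feasible pair $(r_n,s_n)$ with $p^{\otimes n}\otimes r_n\prec s_n\otimes r_n$ and $\TD(s_n,u_2^{\otimes m_n})\le\eps$, i.e.\ $\Fid(s_n,u_{N_n})\ge1-\eps^2$ with $N_n=2^{m_n}$. Write $m_n=nH+b_n\sqrt n$. Without loss of generality restrict to a subsequence realizing $\limsup b_n$. If that limsup is $+\infty$ or finite, the argument below bounds it either way; the bound itself shows it cannot be $+\infty$.

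\emph{Step 1 (power-sum monotonicity).} Fix $t>0$ and set $\alpha_n=1-t/\sqrt n$, so that $1/2<\alpha_n<1$ for large $n$. Since $z\mapsto z^{\alpha_n}$ is concave, $\sum_k z_k^{\alpha_n}$ is Schur concave; zero padding does not affect it since $0^{\alpha_n}=0$. The majorization therefore gives Eq.~\eqref{eq:converse-product}. Both sides factorize over the tensor product, and the finite positive factor $\sum_j r_{n,j}^{\alpha_n}$ cancels, yielding Eq.~\eqref{eq:converse-power}. This step holds for any finite catalyst, with no assumption on its form.

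\emph{Step 2 (fidelity to power sum).} Apply Lemma~\ref{supp:lem:Jensen} with $s=s_n$, $N=N_n$, $\alpha=\alpha_n\ge1/2$. This gives $\sum_i s_{n,i}^{\alpha_n}\ge N_n^{1-\alpha_n}(1-\eps^2)^{\alpha_n}$. Taking $\log$ and combining with Step 1 gives
\begin{equation*}
 n\,g(\alpha_n)\ge (1-\alpha_n)m_n+\alpha_n\log(1-\eps^2),
\end{equation*}
with $g$ as in Eq.~\eqref{supp:eq:g}.

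\emph{Step 3 (expansion).} Use Eq.~\eqref{supp:eq:cumulant} with $x=-t/\sqrt n$:
\begin{equation*}
 n g(\alpha_n)=Ht\sqrt n+\tfrac{(\ln2)V}{2}t^2+O(t^3/\sqrt n).
\end{equation*}
The right-hand side of Step 2 equals $Ht\sqrt n+b_nt+\log(1-\eps^2)+O(1/\sqrt n)$, using $\alpha_n\to1$. The $Ht\sqrt n$ terms cancel exactly. Dividing by $t>0$ gives
\begin{equation*}
 b_n\le \tfrac{(\ln2)Vt}{2}+\tfrac{-\log(1-\eps^2)}{t}+o(1).
\end{equation*}
Here the $o(1)$ depends only on $t$ and $p$, not on the protocol, so the bound is uniform over feasible choices. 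Hence $\limsup b_n$ is bounded by this expression for every $t>0$.

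\emph{Step 4 (optimize).} Minimize over $t$ at $t_*$ from Eq.~\eqref{eq:converse-tstar}. By AM--GM, the minimum value is $\sqrt{2(\ln2)V\log[1/(1-\eps^2)]}$, which is Eq.~\eqref{supp:eq:converse}.

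The only delicate point is Step 3: the $\sqrt n$-order terms must cancel exactly, leaving an $O(1)$ comparison. This requires the Taylor remainder to be controlled uniformly, which holds because $t$ is fixed while $n\to\infty$ and $g$ is analytic near $1$. It also requires $m_n$ to enter linearly with coefficient $1-\alpha_n=t/\sqrt n$, and this is exact. Finally, we must handle the case where $m_n$ could a priori exceed $nH$ by more than $O(\sqrt n)$. This case needs no separate treatment, since the inequality in Step 2 holds for every $n$ and directly bounds $b_n$.
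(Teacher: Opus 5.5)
Your proposal is correct and follows the paper's proof step for step. You apply Schur concavity at $\alpha_n=1-t/\sqrt n$ and cancel the factor $\sum_j r_{n,j}^{\alpha_n}$, lower-bound $\sum_i s_{n,i}^{\alpha_n}$ via the Jensen fidelity lemma, cancel the $Ht\sqrt n$ terms using the expansion of $g$ about $1$, and then optimize over $t$; your added observation that $m_n$ enters linearly, so the remainder is independent of the protocol, is a correct and welcome clarification.
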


\begin{proof}
For each $n$, let $m_n$ be any catalytically feasible output-copy number, with
output $s_n$ and catalyst $r_n$.  Fix $t>0$ and set
\begin{equation}
 \alpha_n=1-\frac{t}{\sqrt n}.
\end{equation}
For all sufficiently large $n$, $1/2<\alpha_n<1$.  Feasibility means
$p^{\otimes n}\otimes r_n\prec s_n\otimes r_n$.  For
$0<\alpha_n<1$, the function $x\mapsto x^{\alpha_n}$ is concave, so
Schur concavity of the power sum gives
\begin{equation}
 \sum_{i,j}(p_i^{\otimes n}r_{n,j})^{\alpha_n}
 \ge\sum_{i,j}(s_{n,i}r_{n,j})^{\alpha_n}.
\end{equation}
The positive factor $\sum_jr_{n,j}^{\alpha_n}$ cancels, so
\begin{equation}
 \left(\sum_i p_i^{\alpha_n}\right)^n
 \ge\sum_i s_{n,i}^{\alpha_n}.
 \label{supp:eq:catalyst-cancel}
\end{equation}
Together with
Lemma~\ref{supp:lem:Jensen} we obtain
\begin{equation}
 \left(\sum_i p_i^{\alpha_n}\right)^n
 \ge \sum_i s_{n,i}^{\alpha_n}
 \ge N_n^{1-\alpha_n}(1-\eps^2)^{\alpha_n}.
 \label{supp:eq:converse-chain}
\end{equation}
Taking base-2 logarithms, using
$g(\alpha)=\log\sum_i p_i^\alpha$ and $\log N_n=m_n$, yields
\begin{equation}
 n g(\alpha_n)
 \ge (1-\alpha_n)m_n
 +\alpha_n\log(1-\eps^2).
 \label{supp:eq:converse-log}
\end{equation}
Write
\begin{equation}
 b_n=\frac{m_n-Rn}{\sqrt n}.
\end{equation}
Since $\alpha_n=1-t/\sqrt n$, Eq.~\eqref{supp:eq:cumulant} gives
\begin{equation}
 n g(\alpha_n)
 =Ht\sqrt n+\frac{(\ln2)Vt^2}{2}+O(n^{-1/2}).
 \label{supp:eq:converse-source-expansion}
\end{equation}
On the other hand, from $m_n=Rn+b_n\sqrt n$ and $R=H$,
\begin{align}
 &(1-\alpha_n)m_n+\alpha_n\log(1-\eps^2)\nonumber\\
 &\quad=Ht\sqrt n+t b_n+\log(1-\eps^2)+O(n^{-1/2}).
 \label{supp:eq:converse-rhs-expansion}
\end{align}
Substituting Eqs.~\eqref{supp:eq:converse-source-expansion} and
\eqref{supp:eq:converse-rhs-expansion} into
Eq.~\eqref{supp:eq:converse-log}, the common term $Ht\sqrt n$ cancels.  Thus
\[
 \frac{(\ln2)Vt^2}{2}
 \ge t b_n+\log(1-\eps^2)+o(1).
\]
Dividing by $t>0$ and rearranging gives
\begin{equation}
 b_n\le
 \left[
 \frac{(\ln 2)Vt}{2}+\frac{-\log(1-\eps^2)}{t}\right]+o(1).
 \label{supp:eq:converse-bn}
\end{equation}
Thus the limsup of every feasible sequence obeys the same bound.  Minimizing
over $t>0$ at
\begin{equation}
 t_*=\sqrt{\frac{2[-\log(1-\eps^2)]}{(\ln 2)V}}
\end{equation}
gives Eq.~\eqref{supp:eq:converse}.
\end{proof}

Theorem~\ref{thm:main} follows directly from Propositions~\ref{supp:prop:achievability} and
\ref{supp:prop:converse}.

\end{document}